\documentclass[journal]{IEEEtran}

\usepackage[utf8]{inputenc}
\usepackage[T1]{fontenc}
\usepackage{microtype}
\usepackage{booktabs}
\usepackage{array}
\usepackage{xcolor}
\usepackage{amsmath}
\usepackage{amssymb}
\usepackage{graphicx}
\usepackage[hidelinks]{hyperref}
\hypersetup{pdftitle={Sensitive-Topic Leakage Through LLM Routing Metadata: Measurement and Mitigation}, pdfauthor={Teng-Ruei Chen}}
\usepackage{xurl}
\usepackage{needspace}

\newcommand{\code}[1]{\texttt{#1}}
\newcommand{\dadj}{\Delta_{\mathrm{adj}}}
\newtheorem{proposition}{Proposition}
\newtheorem{remark}{Remark}
\title{Sensitive-Topic Leakage Through LLM Routing Metadata: Measurement and Mitigation}

\author{Teng-Ruei~Chen%
\thanks{T.-R. Chen is with Krixvon, Taipei 100, Taiwan (e-mail: \mbox{ymchen.bi04g@g2.nctu.edu.tw}; ORCID: 0000-0001-8995-334X).}%
\thanks{Preprint, October 2026. Every study in Table~\ref{tab:prereg} was pre-registered: its protocol was committed to our private
repository before its data were obtained or, if it reused data, before its new step. E-3 to E-5 re-analyzed WildChat shards whose
earlier results were known; E-8 was frozen after E-5b had measured the two stickiness defenses post hoc. Protocols, code, and
aggregate results are released (Artifact Statement).}}
\begin{document}
\maketitle

\begin{abstract}
LLM routers pick a cheap or expensive model per request by its content, and many gateways and some cloud platforms can log that choice with content logging off. We measure this privacy channel beyond token counts, accounting for noisy labels and repeated prompts. We run pre-registered studies on 1.7 million real requests (WildChat-1M, LMSYS-Chat-1M) with two cost/quality routers and a domain router, survey eleven systems' logging, and test post-processing defenses. At matched length, the shift's direction depends on category and router. For RouteLLM at the 50\% operating point, harassment and self-harm requests reach the strong model 19 points less often than comparable ones on prompts unseen in exploration, medical requests (exploratory: LLM labels failed their gate) 31 points less often on distinct prompts (both post hoc), and sexual requests 10 points more often (secondary); the other router's four are negative. Twenty RouteLLM decisions separate frequent medical askers with AUC 0.71, exploratory and below the pre-registered primary endpoint's 0.75 (domain router: 0.92, an upper estimate). Per-category length-matched parity with accurate labels removes the gap on real traffic, costing at most 0.2 accuracy points on RouterBench (post hoc), where routers' gaps on sensitive subjects (13--42 points, pre-registered) exceed those of an oracle routing by realized accuracy gain (1--11, post hoc). Per-conversation stickiness, per-user budget bands, and pooled parity fail, the last as categories' shifts differ in size or sign. A post hoc exact per-user rate hides only even-prefix strong counts and forfeits most self-assessed routing value; it preserves odd-position decisions, from which a post hoc log attack reaches AUC 0.73 after 20 RouteLLM requests (exploratory).
\end{abstract}

\begin{IEEEkeywords}
LLM routing, privacy, side channels, metadata, measurement, fairness post-processing.
\end{IEEEkeywords}

\begin{figure*}[!t]
\centering
\includegraphics[width=\textwidth]{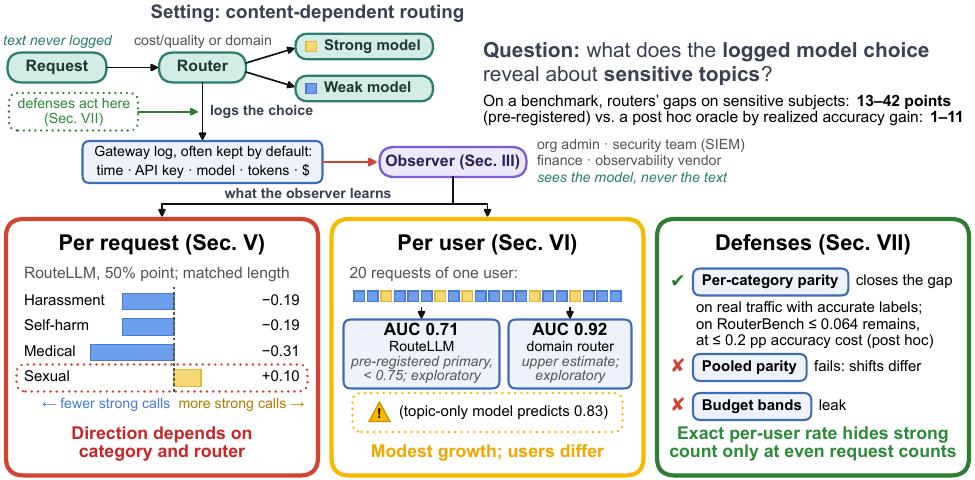}
\caption{Overview. Top: gateways can keep the router's choice, the caller's key, and token counts, often by default and with content
logging off (Section~\ref{sec:threat}). Bottom: what these choices reveal per request (Section~\ref{sec:perreq}; RouteLLM's gaps:
harassment and self-harm on never-seen and medical on distinct WildChat prompts are post hoc, medical labels exploratory, sexual
secondary; notdiamond-0001's four gaps are negative) and per user (Section~\ref{sec:userlevel}; 0.71 is the pre-registered primary
endpoint, below 0.75; both AUCs are exploratory as the label gate failed), and what defenses cost (Section~\ref{sec:defense}; post hoc: parity's RouterBench cost and
the exact rate, under which original A2 reaches 0.53 and an odd-position log attack reaches 0.73 after 20 requests). Benchmark line: RouterBench; post hoc oracle
at the same budget.}
\label{fig:overview}
\end{figure*}

\section{Introduction}

Large language model (LLM) applications increasingly place a routing or cascade layer, which we call a \emph{router}, in front of several
models. For each request it either picks one model directly or queries a cheap model first and escalates to a stronger one when needed,
trading answer quality against cost~\cite{chen2023frugalgpt,ding2024hybrid,aggarwal2023automix,ong2024routellm,hu2024routerbench}. Gateways and cloud platforms now offer routers as built-in options that a caller
selects by model name~\cite{openrouterauto}. A router's decision is, by construction, a function of the content of the request.

Prior work falls into three lines, reviewed in full in Section~\ref{sec:related}. Routing research comprises cascades and learned routers that trade cost against quality~\cite{chen2023frugalgpt,aggarwal2023automix,ding2024hybrid,ong2024routellm}, benchmarks for them~\cite{hu2024routerbench}, analyses showing that their decisions follow the query's category and register~\cite{kassem2025robust,koul2026register}, and domain routers that pick a model by the request's subject~\cite{vllmsr}. Privacy-preserving routing keeps sensitive content from remote models or from the router: a local model protects what is sent to a proprietary one~\cite{papillon2025}, requests are routed between edge and cloud by their sensitivity~\cite{prism_router}, and secure computation hides the request from the router~\cite{pproute2026}. Side-channel research shows that observable behavior that depends on private inputs reveals them: location-aware path selection in Tor reveals where clients are~\cite{wails2018tempest,rochet2020claps}, the exits and timing of adaptive networks reveal predictions and input attributes~\cite{multiexit2023ccs,akinsanya2024ndss}, the packet sizes and timing of encrypted LLM traffic reveal prompt topics and the serving model~\cite{whisperleak2025,pouryousef2026traffic}, and, in agent systems, which agents are contacted and which messages are chosen reveal task types, private state, and intent~\cite{dangol2026graph,xu2026sicc,peng2026agentic}.

None of these lines examines a model router's decision as it is logged. Many gateways, and some cloud platforms, can record by design the served model and the caller of each request, often with token counts, even when content logging is off, and in some of them this is the default; some routing products also write the decision itself as a named field (Section~\ref{sec:threat}). Privacy-preserving routing systems protect content but do not analyze or defend the post-routing operational record considered here; when the route depends on sensitivity, the record carries the sensitivity label itself. The only proposal we found that randomizes the local/remote delegation bit, and bounds what it leaks, is evaluated only in simulation~\cite{gaikwad2025avec}. Agent-level work shows that selections leak~\cite{dangol2026graph,xu2026sicc,peng2026agentic,kim2026agentcardleak}, but there agents make the selection, and observers read it from network traffic, message transcripts, or synthetic discovery logs, not from the business records of real deployments. To our knowledge, whether model-router decisions reveal sensitive topics on real user traffic, per request or across a user's requests, and what closing that channel costs, has not been measured.

This paper measures it. We treat the per-request routing decision, as recorded by a gateway, as a channel from the request's topic to
an observer who never sees the text (Fig.~\ref{fig:overview}). Four things make the measurement hard. Sensitive requests differ from others in length, and token counts are often visible as well, so
leakage must be measured at matched length and \emph{beyond} what length reveals. Labels for sensitive topics are noisy: the LLM judges we
use for medical content failed our own validity gate twice, against the author's labels on LMSYS and an AI coding assistant's on
WildChat. Public chat logs repeat the same prompts many times, which narrows
confidence intervals and inflates apparent replication: 61\% of the held-out harassment requests repeat a prompt from the exploration files.
And an honest measurement must survive the choice of router, dataset, and operating point, which invites selective reporting. We address
these with length-matched gaps and a flexible length model, moderation labels shipped with the data for the primary categories
(unvalidated, but needing no LLM judge), prompt fingerprints used as a duplicate proxy, and pre-registered protocols with a held-out replication of the primary per-request claims
(Section~\ref{sec:method}).

\medskip
\noindent Concretely, this paper contributes:
\begin{enumerate}
\item \textbf{The routing decision as a privacy channel.} A threat model grounded in field-level evidence from eleven gateways, routers,
and cloud platforms: in at least six of them, per-request records naming the chosen model can be kept without content and linked to
the caller, in several of them by default (Section~\ref{sec:threat}).
\item \textbf{A pre-registered, duplicate-aware measurement on real traffic.} At matched length, the direction of the shift depends on
the category and the router. For RouteLLM at the 50\% operating point, harassment and self-harm requests reach the strong model
19 points less often than comparable requests on never-seen prompts (post hoc; 24--32 points on all held-out rows, as pre-registered),
sexual requests 10 points more often (secondary; predicted to stay below 10), and medical requests, with exploratory LLM
labels, 31 points less often on distinct WildChat prompts (post hoc; 33 on all rows). The second cost/quality router,
notdiamond-0001, sends all four less often, though its never-seen self-harm gap is indistinguishable from zero. Twenty RouteLLM
decisions separate frequent medical askers from users with no medical requests at an area under the ROC curve (AUC) of 0.71,
exploratory and below the pre-registered primary endpoint's 0.75, and 0.92 when the decision is a domain router's health label (an
upper estimate, as its labels helped select what the judges saw; Sections~\ref{sec:perreq}--\ref{sec:userlevel}).
\item \textbf{Defenses and their cost.} Length-matched parity per category removes the per-request gap on real traffic when labels are
accurate (half of it with a cheap classifier). On RouterBench, the routers' gaps on sensitive subjects (13--42 points, pre-registered)
exceed those of an oracle that routes by realized accuracy gain (1--11, post hoc), and per-category parity costs at most 0.2 accuracy
points but leaves gaps of up to 0.064 (post hoc). Pooled parity fails because categories shift by different amounts and, for RouteLLM on LMSYS, in
opposite directions; per-conversation stickiness barely helps, and per-user budget bands leak at their edges. A post hoc exact per-user
rate hides only the strong count, and only at even request counts, at the cost of most of the router's self-assessed value, which
we could not express in accuracy and dollars. Its ordered log preserves odd-position decisions exactly
(Remark~\ref{rem:exact}); a post hoc attack using them reaches AUC 0.73 after 20 RouteLLM requests, exploratory
(Section~\ref{sec:oddattack}).
\end{enumerate}

Section~\ref{sec:related} reviews related work and Section~\ref{sec:method} the method; Sections~\ref{sec:discussion}--\ref{sec:limits}
discuss recommendations, ethics, and limitations.

\section{Background and Related Work}
\label{sec:related}

This section classifies prior work by what it protects or measures, and closes with a feature comparison (Table~\ref{tab:related}).

\subsection{Cost/quality and domain routing}
Routers choose, per request, which model answers, mostly to trade answer quality against cost. FrugalGPT learns which combination of models in a cascade to query for each request~\cite{chen2023frugalgpt}, and AutoMix escalates a query to a larger model when a smaller model's self-verified answer looks unreliable~\cite{aggarwal2023automix}. Hybrid LLM routes each query to a small or a large model by its predicted difficulty~\cite{ding2024hybrid}. RouteLLM trains routers between a strong and a weak model on human preference data from Chatbot Arena~\cite{ong2024routellm,chiang2024arena}, and RouterBench collects per-model outcomes and costs for evaluating routers~\cite{hu2024routerbench}. Commercial routers appeared alongside them: Not Diamond released notdiamond-0001, an open router between GPT-3.5 and GPT-4~\cite{notdiamond0001}, and OpenRouter's Auto Router assigns each request one of about 30 task types before choosing a model~\cite{openrouterauto}. %
Domain routers such as vLLM Semantic Router classify the request's domain and route by the label~\cite{vllmsr}. Cloud platforms offer routers of their own, such as intelligent prompt routing in Amazon Bedrock, which chooses between two models of one family by their predicted quality difference, and the model router of Microsoft Foundry, formerly Azure AI Foundry (their logging is described in Section~\ref{sec:threat}). %
An evaluation of four open-source routers on four benchmarks found that only vLLM Semantic Router's choice varies materially with the prompt, whereas RouteLLM's similarity-weighted ranking (\texttt{sw\_ranking}) router, at its default threshold of 0.5 and with a strong/weak pair it was not calibrated for, sent every task to the weak model~\cite{kumar2026routers}. %
Closest to our observation, Kassem et al.~\cite{kassem2025robust} evaluate five routers, including two commercial ones, on a benchmark of
task categories and find that routing decisions are driven by the query's category, for example sending adversarial prompts to weaker
models. In their privacy case study, Kassem et al. use a 200-prompt, four-category subset derived from the PUPA benchmark introduced by
PAPILLON~\cite{papillon2025}; it includes healthcare, and although the authors call its routing balanced, their Fig.~11 shows 146 of
the 200 prompts, 33 of the 50 healthcare ones, sent to the strong model, an unconditional share not comparable with a length-matched gap; their concern is
robustness, safety, and the routing of personal data, not what an observer of the decisions learns. Koul~\cite{koul2026register} shows that
complexity-based routing assigns non-standard English registers to lower-capacity tiers, largely through input length, and Wang et
al.~\cite{wang2026misguided} find on 800 benchmark queries that most of six evaluated commercial routers---including OpenRouter's
Auto Router and a vLLM Semantic Router fitted on the authors' training data---send nearly all queries from each source benchmark to
one model, largely independently of within-source difficulty. These works
measure what routing saves, what it costs in quality, and whom it serves worse; none asks what the decision reveals about the request
to those who record it.

\subsection{Privacy-preserving routing}
Systems that combine a local and a remote model for privacy decide, per request, what reaches the remote side: PAPILLON uses a local
model to protect what is sent to a proprietary one~\cite{papillon2025}, and PRISM routes between edge and cloud according to the
request's sensitivity~\cite{prism_router}. When such a decision depends on sensitivity, it is itself a sensitivity signal. Wu et
al.~\cite{pproute2026} protect the user query and routing model during route computation with secure computation, but do not study
exposure or logging of the selected model after computation.
AVEC~\cite{gaikwad2025avec} observes that the local/remote delegation bit leaks and bounds the leakage with randomized response, without
a measurement on real routers or traffic.

\subsection{Side channels of adaptive inference and LLM traffic}
Adaptive computation leaks through what it chooses to do. Kannan et al.~\cite{multiexit2023ccs} show that exit points of distributed
multi-exit networks reveal predictions and propose defenses that balance exits across classes, the structural precedent of our parity
defense. Akinsanya and Brennan~\cite{akinsanya2024ndss} show that timing channels of adaptive networks leak sensitive input
attributes. Whisper Leak~\cite{whisperleak2025} infers the topic of LLM conversations from the sizes and timing of streamed packets, and the
response timing of efficient inference, such as speculative sampling, reveals conversation topics too~\cite{carlini2024timing}.
In anonymity systems, path selection that depends on the client's location leaks it, and the leakage accumulates over repeated choices;
Tempest~\cite{wails2018tempest} measures this and CLAPS~\cite{rochet2020claps} trades it off against performance with linear programming.
Elsewhere, the ads a platform delivers to a user, aggregated over many impressions, reveal private attributes of real
users~\cite{chen2025adsprofiles}; expert routing inside mixture-of-experts LLMs leaks prompt attributes through hardware side
channels~\cite{ding2025moecho}, and the GPU telemetry it induces identifies harmful prompts~\cite{lv2026routescan}.
Pouryousef et al.~\cite{pouryousef2026traffic} show that a passive network observer can identify which model served a request from
encrypted traffic alone, so a routing decision may be visible even without logs. In agent systems, the same insight has recently been
applied to selections made by orchestrators rather than routers. Which service agents are called, and in what order, reveals intent and
can be masked with budgeted decoy calls~\cite{peng2026agentic}; communication-graph metadata~\cite{dangol2026graph} and, in a synthetic study, content-free
agent-discovery metadata~\cite{kim2026agentcardleak} reveal task types and intent; and choices among equivalent messages form selection channels that
length-matched controls do not close~\cite{xu2026sicc}. Outside agent systems, in a controlled support-routing task, an active attacker who writes the input can turn the
decision labels of Jev, a fast decision
model that applications use to route requests, into an oracle on hidden user attributes~\cite{wang2026jev}.
Our setting differs from all of these. The decision is made by a cost/quality or domain router from the request content, a passive observer reads it
from business records (logs and bills) that hold no content, rather than from network traffic, message transcripts, hardware
telemetry, or synthetic discovery logs, and we measure it on real user traffic, per request and per user, and
evaluate defenses against it. We therefore do not claim to be the first to treat a selection as a leak; to our knowledge, ours is the first
privacy measurement of model-router choices as observable log or billing metadata on real user traffic, including per-request and
per-user inference and defense costs.

\subsection{Defense lineage}
Our parity defense applies fairness post-processing to a router: group-dependent
thresholds equalize decision rates. Related optimal post-processing results exist
for equalized odds and equal opportunity~\cite{hardt2016equality} and, under a
Bayes-optimal-score assumption, for demographic parity~\cite{xian2023fair};
we do not claim that those results make our router rule optimal.
Our exact per-user-rate defense also keeps persistent per-user state; this only
loosely echoes RAPPOR, which memoizes a permanently randomized answer and then
applies fresh instantaneous randomization to each report~\cite{erlingsson2014rappor}.
It does not inherit RAPPOR's differential-privacy guarantee. We claim no new theory; our contribution is to instantiate these tools for the
router observer and to measure them on real traffic.

\begin{table*}[!t]
\centering
\caption{Related work by the properties this paper's contribution turns on. Real router: an implemented router or selection mechanism,
not a simulated one. Real traffic: evaluated on passively collected or released records of organic
user activity, rather than researcher-issued or replayed requests. A check means that at least one cited work in the row has the
property; partly means that it holds only in a weaker form (notes a--f).}
\label{tab:related}
\small
\begin{tabular}{@{}p{0.34\textwidth}cccccc@{}}
\toprule
Work & \shortstack{Real\\router} & \shortstack{Real\\traffic} & \shortstack{Decision\\as channel} & \shortstack{User-level\\accumulation} & \shortstack{Log/billing\\observer} & \shortstack{Defense\\evaluated} \\
\midrule
Cost/quality routers~\cite{chen2023frugalgpt,ding2024hybrid,ong2024routellm} & \checkmark & partly$^{\mathrm{a}}$ & -- & -- & -- & -- \\
Routing robustness and bias~\cite{kassem2025robust,koul2026register,wang2026misguided} & \checkmark & partly$^{\mathrm{b}}$ & partly$^{\mathrm{c}}$ & -- & -- & -- \\
Privacy-preserving routing~\cite{papillon2025,prism_router,pproute2026} & \checkmark & partly$^{\mathrm{b}}$ & -- & -- & -- & content only \\
AVEC~\cite{gaikwad2025avec} & -- & -- & \checkmark & partly$^{\mathrm{d}}$ & -- & simulation \\
Multi-exit networks~\cite{multiexit2023ccs} & \checkmark & -- & \checkmark & -- & -- & \checkmark \\
Tor path selection~\cite{wails2018tempest,rochet2020claps} & \checkmark & \checkmark & \checkmark & \checkmark & -- & \checkmark \\
Ad-delivery profiling~\cite{chen2025adsprofiles} & \checkmark & \checkmark & \checkmark & \checkmark & -- & -- \\
Agent selection leakage~\cite{dangol2026graph,xu2026sicc,peng2026agentic,kim2026agentcardleak} & \checkmark & -- & \checkmark & -- & partly$^{\mathrm{e}}$ & \checkmark \\
Decision-model attribute inference~\cite{wang2026jev}
& \checkmark & -- & \checkmark & -- & -- & -- \\
LLM traffic analysis~\cite{whisperleak2025,pouryousef2026traffic} & -- & -- & partly$^{\mathrm{f}}$ & -- & -- & \checkmark \\
This work & \checkmark & \checkmark & \checkmark & \checkmark$^\ast$ & \checkmark & \checkmark \\
\bottomrule
\end{tabular}

\smallskip
\footnotesize $^{\mathrm{a}}$RouteLLM trains on Chatbot Arena preference data but is evaluated on benchmarks; Hybrid LLM is evaluated on
MixInstruct, which includes a small share of shared ChatGPT conversations.
$^{\mathrm{b}}$Only on benchmarks built from public collections of user-LLM interactions: WildChat subsets, including PUPA, in Kassem et
al.; PUPA in PAPILLON. PPRoute~\cite{pproute2026} evaluates on EmbedLLM, MixInstruct (note a), and RouterBench.
$^{\mathrm{c}}$Decisions are shown to depend on query category or register, but not as a leak to an observer.
$^{\mathrm{d}}$Cumulative loss is tracked in a R\'enyi differential-privacy budget; no inference across a user's requests is measured.
$^{\mathrm{e}}$Loggers are among the modeled attackers (SICC~\cite{xu2026sicc}; Dangol~\cite{dangol2026graph}; AgentCardLeak~\cite{kim2026agentcardleak}), but the evaluations use
message transcripts, the transport metadata of agent traffic, or synthetic discovery logs, not business records of real deployments.
$^{\mathrm{f}}$The serving model is inferred from encrypted traffic, which behind a router is the decision, but no router is studied.
$^\ast$Medical requests on WildChat only, with exploratory LLM-judge labels; the moderation-labeled categories come from LMSYS-Chat-1M,
which has no user identifiers (Section~\ref{sec:userlevel}).
\end{table*}

\section{Threat Model}
\label{sec:threat}

This section states what a routed deployment records, who can read that record, and what the attacker is assumed to know. It rests
on a field-level survey of eleven gateways, routers, and cloud platforms, based on 59 primary sources (official documentation and
source-code schemas) accessed on October~1, 2026; the evidence table (in Traditional Chinese), with a link to the most recent Wayback
Machine snapshot that the availability API returned for 55 of the sources, is released with the artifact.

\subsection{System model}
Each request $x_t$ of a caller passes through a router that selects the model $R_t$ that answers it; for the two-model routers of
Section~\ref{sec:model}, $R_t$ is the binary strong-or-weak decision. A gateway or cloud platform can keep, per request, the record
(time, caller, $R_t$, input tokens, output tokens, cost), where the caller is an API key, organization member, user id, or AWS
Identity and Access Management (IAM) principal. In at least six of the eleven systems (OpenRouter, LiteLLM, Kong, Cloudflare, Helicone, and AWS), first-party documentation or
schemas show that per-request metadata with the served (for AWS, the requested) model and time can be kept without content and linked
to the caller. Portkey may be a seventh, but its documentation does not state whether no-content mode keeps the model field; for Azure we confirmed such records only in
its separate API Management gateway, which we do not count.
In OpenRouter and LiteLLM this metadata-only record is the default, and AWS CloudTrail records by default the caller and the requested
model of every Bedrock \code{InvokeModel} and \code{Converse} call, streaming variants included, without token counts; other inference
operations, such as asynchronous invocation, are data events, which are off by default.
Five routing products (OpenRouter, LiteLLM, Microsoft Foundry, Amazon Bedrock prompt routing, and Not Diamond) also return or record
the decision as a named field, such as Bedrock's \code{invokedModelId} or LiteLLM's complexity-router tier header, so reading $R_t$
requires no inference; OpenRouter's Auto Router also returns a task type when the caller requests metadata~\cite{openrouterauto}.

\subsection{Observers}
\begin{table}[!t]
\centering
\caption{Observers of routing metadata in the eleven surveyed systems. Default: available without any non-default configuration.}
\label{tab:observers}
\small
\setlength{\tabcolsep}{4pt}
\begin{tabular}{@{}>{\raggedright\arraybackslash}p{0.22\columnwidth}>{\raggedright\arraybackslash}p{0.37\columnwidth}%
>{\raggedright\arraybackslash}p{0.33\columnwidth}@{}}
\toprule
Observer & Sees & Default \\
\midrule
Org admin / platform team & per request: model, time, tokens, cost, key or member & OpenRouter, LiteLLM, Portkey: yes; Cloudflare,
Helicone: no caller id; Kong, AWS: after setup \\
Security team & per request: caller identity, model, time & CloudTrail history: yes; SIEM: after export \\
Finance & aggregates by model and hour or day, per key, member, or principal on some platforms & partly \\
Vendor; observability subprocessor & per-request metadata, also in privacy modes & hosted vendor: yes; subprocessor: once
configured \\
Network observer & no documented cleartext model id; serving model inferable from encrypted
traffic~\cite{pouryousef2026traffic} & not measured here \\
\bottomrule
\end{tabular}
\end{table}
The main adversary is an insider with default administrative access to the gateway (Table~\ref{tab:observers}), such as a LiteLLM
proxy admin or an OpenRouter organization admin: without enabling any content logging, they obtain each user's itemized sequence of
routing decisions and can run the itemized-log attacker A2 (\ref{eq:a2}). In OpenRouter organizations, every member, not only admins,
sees the model, cost, and timing of all members' usage. The security team sees the same sequence of
decisions once logs are exported to a security information and event management (SIEM) system, and on AWS, through CloudTrail's default event history and without any Bedrock
configuration, when the application itself routes between model identifiers with \code{InvokeModel} or \code{Converse}; forwarding
these events to a SIEM requires a trail. Finance sees aggregates by model and hour or day, per key, member, or principal
on some platforms, which corresponds to the bill-only attacker A1 (\ref{eq:a1}) and the billing observers of
Section~\ref{sec:attackers}. Hosted gateway vendors store the metadata themselves, and observability subprocessors still receive the
model, tokens, cost, and user or session identifiers in privacy modes that strip the content. A passive network observer can infer
the serving model from encrypted traffic~\cite{pouryousef2026traffic}; we do not measure this observer.

\subsection{Attacker capabilities and assumptions}
The attacker (i) sees $R_t$ per request or, in the case of finance, per-period aggregates; (ii) is conservatively given the token count
of each request's message, a stand-in for the input-token count that many surveyed per-request logs store alongside the model and
some finance records report in aggregate by model; (iii) never sees content; (iv) holds labels for some users, from which it fits
its likelihoods; and (v) needs no router internals, only the recorded outputs. It only reads records. Assumption (ii) fixes what we
measure: what routing reveals beyond the length of the request message, namely the gap at matched length (\ref{eq:dadj}) and the gain
over a length model (\ref{eq:dauc}), not the raw difference in strong-routing rates; logged input counts also cover earlier turns,
and we ignore output tokens. Assumption (iv) is generous, as our attackers are fitted on half of
the users (Section~\ref{sec:attackers}), but an insider could approximate such labels from their own or colleagues' known usage.

\subsection{Why not enable content logging}
An insider able to switch on content logging might seem to need no side channel, yet three situations make the routing record the
channel that matters. First, where policy or compliance keeps content logging off, enabling it breaks that policy and captures only
future requests, whereas the routing record of past requests is already kept. Second, metadata is kept independently of content and
often longer: Helicone's self-hosted schema sets a three-month expiry only on request and response bodies, LiteLLM by default never
purges its spend logs, and CloudTrail's event history keeps the past 90 days of management events per region without any setup.
Third, some observers cannot switch content logging on at all: finance, observability subprocessors in privacy modes, and anyone who
breaches or subpoenas a metadata store see only the metadata. Two harms follow. An employer that administers the gateway could try to
infer which employees often ask about health or self-harm, while the employees may believe that disabled content logging protects
them; and retained metadata could be breached, misused, or subpoenaed after the content is deleted. Per-request leakage
(Section~\ref{sec:perreq}) lets an observer flag individual requests; user-level accumulation, which we measure for medical requests
only (Section~\ref{sec:userlevel}), lets it rank users.

\subsection{Framing}
Many gateways, and some cloud platforms, record such metadata by design for billing and monitoring; their content controls are
documented to remove content, not metadata; we do not treat any of this as a vulnerability. Our point is narrower: when the router
chooses the model from the content, disabling content logging does not by itself hide the topic. Latency and timing channels are out
of scope.

\section{Model and Methodology}
\label{sec:method}

The original measurements are taken or recomputed from the released result files, without altering them.
The post hoc odd-position follow-up is recorded separately and reported in full in Table~\ref{tab:oddattack}.

\subsection{System and observer model}
\label{sec:model}
Requests are indexed by $i=1,\dots,N$. Request $i$ has a caller $u_i$, a time $t_i$, content $x_i$, a token count $n_i$ (the length
of its message under the RouteLLM tokenizer), and a category $c_i \in \mathcal{C}\cup\{O,U\}$, where $\mathcal{C}$ is the set of
sensitive categories, $O$ the comparison set of non-sensitive requests, and $U$ the requests in neither (e.g., law-domain requests on
WildChat), which enter no gap. On the LMSYS held-out files, which no judge read, $O$ also contains medical and legal requests (on the
exploration files, adding them to $O$ moved RouteLLM's harassment, self-harm, and sexual gaps by less than 0.01). A router is a score function $s$ that maps content to $[0,1]$, followed by a threshold; we write
$s_i = s(x_i)$. At operating point $q$ the threshold $\tau_q$ is the $(1-q)$-quantile of $s$ over $O$, so that a fraction $q$ of
comparison requests reaches the strong model, and the decision is $R_i = \mathbf{1}[s_i\ge\tau_q]$. Unless stated otherwise, $q=0.5$,
the example in RouteLLM's documented calibration to a target fraction of strong-model calls~\cite{routellmrepo}; Section~\ref{sec:perreq}
reports the whole range of $q$. The gateway keeps the record
$(u_i, t_i, R_i, n_i, \mathrm{cost}_i)$; $n_i$ stands in for the logged input tokens, which can also include the conversation
history, and we ignore output tokens. The observer reads records and never reads $x_i$ (Section~\ref{sec:threat}).

For RouteLLM, $s$ is the probability that the strong model's answer wins (one minus the softmax mass of the two non-winning
classes); for notdiamond-0001, that of the GPT-4 class under the router's own prompt template. The domain router's decision is whether
its domain label is \emph{health}.

\subsection{Data, routers, and labels}
\label{sec:data}\label{sec:labels}
From English conversations we use the user messages of at least 10 characters: all of them in WildChat-1M~\cite{zhao2024wildchat}
(982,067 requests from 97,742 hashed-IP users) and the first per conversation in LMSYS-Chat-1M~\cite{zheng2024lmsys} (251,230 from
files 0--1 for exploration and 502,477 from the unused files 2--5 for the held-out replication). We also use
RouterBench~\cite{hu2024routerbench} (36,497 benchmark prompts with per-model correctness and cost). The routers are RouteLLM's BERT router~\cite{ong2024routellm},
notdiamond-0001~\cite{notdiamond0001} (\emph{notdiamond} below), and the Vela domain classifier of vLLM Semantic
Router~\cite{vllmsr} (the \emph{domain router}), all at pinned revisions whose weight hashes were checked before every run. Text was processed on a GPU host
with open-weight models, and only text-free tables left it (exceptions: Section~\ref{sec:ethics}). We name the studies we report after their protocols (E-3 to E-9,
Table~\ref{tab:prereg}); E-1 and E-2 were exploratory pilots whose results we do not report.

\begin{table*}[!t]
\centering
\caption{Pre-registration record: every primary and co-primary endpoint of E-3 to E-9, including failed ones, and the label gates.
Frozen: date (UTC+8) of the protocol-freezing commit in our private repository, before data download or collection; the WildChat studies
re-analyzed one release, each frozen after its predecessor's results (E-5: before judging; E-8: before parity and band defenses).}
\label{tab:prereg}
\footnotesize
\setlength{\tabcolsep}{4pt}
\begin{tabular}{@{}llp{0.335\textwidth}p{0.49\textwidth}@{}}
\toprule
Study & Frozen & Pre-registered endpoint & Outcome \\
\midrule
E-3 & 2026-09-28 & Medical $|\Delta_{\mathrm{adj}}| \ge 0.10$ (P1) & Met: $-0.251$ \\
E-3 & 2026-09-28 & User-level AUC $\ge$ 0.75 at $T{=}20$ (P2) & Not determinable (15 positive users; 30 required) \\
E-4 & 2026-09-29 & Medical $|\Delta_{\mathrm{adj}}| \ge 0.10$ (P1) & Met: $-0.204$ (zero-shot labels; precision 0.55 in an AI assistant's check) \\
E-4 & 2026-09-29 & User-level AUC $\ge$ 0.75 at $T{=}20$ (P2) & Met by rule: 0.7515 [0.699, 0.799]; later shown to be the upper end of noise \\
E-4 & 2026-09-29 & Same AUC, labels from later requests (co-primary) & Not met: 0.743 [0.673, 0.809] \\
E-5 & 2026-09-29 & Medical $|\Delta_{\mathrm{adj}}| \ge 0.10$ (P1) & Met: $-0.331$ \\
E-5 & 2026-09-29 & User-level AUC $\ge$ 0.75 at $T{=}20$ (P2) & Not met: 0.707 [0.671, 0.740] \\
E-5 & 2026-09-29 & Same AUC, labels from later requests (co-primary) & Not met: 0.688 [0.637, 0.733] \\
E-5 & 2026-09-29 & Label gate: precision $\ge$ 0.85, recall $\ge$ 0.80 & Not met: precision 0.70 against labels assigned by the same AI assistant (Claude), not by the author; recall not estimable; medical results exploratory \\
E-6a & 2026-09-30 & Sexual and self-harm $|\Delta_{\mathrm{adj}}| \ge 0.10$, CI excludes 0 & Partly met: self-harm $-0.171$; sexual $+0.088$ (below 0.10; predicted $<0$) \\
E-6a & 2026-09-30 & Label gate: precision $\ge$ 0.85, weighted recall $\ge$ 0.80 & Not met: precision 0.75, weighted recall 0.69 (against the author's labels) \\
E-7-lite & 2026-10-01 & Harassment, self-harm $\Delta_{\mathrm{adj}} \le -0.10$, 97.5\% CI $<0$ & Both met: $-0.316$, $-0.238$; never-seen prompts (post hoc): $-0.193$, $-0.186$, still met \\
E-8 & 2026-09-30 & Deployable defense (four criteria) & Not met: leak, user AUC, retained benefit (0.33), and budget all fail \\
E-8b & 2026-10-01 & Pooled parity: $\le$1\,pp acc., $\le$5\% cost, leak $\le$ 0.02 & Not met: accuracy $-0.22$\,pp and cost $-3.8$\% pass; leak 0.137 fails \\
E-9 & 2026-10-01 & Commercial router, F2 topic templates, default cost tier: model-choice TV, HEALTH vs.\ NEUTRAL & Pending; established if net TV $\ge$ 0.10 and Holm-adjusted permutation $p<0.05$ for at least one of the two comparisons \\
E-9 & 2026-10-01 & Same, MENTAL vs.\ NEUTRAL & Pending. TV: total variation distance; net TV: TV minus its permutation mean \\
\bottomrule
\end{tabular}
\end{table*}

Harassment, self-harm, sexual, and violence groups use the per-message OpenAI
moderation outputs shipped with LMSYS-Chat-1M~\cite{zheng2024lmsys,lmsyscard}; conversations flagged for sexual
content involving minors are dropped entirely. Medical labels come from LLM judges: one judge on WildChat and the intersection of two on LMSYS. Judges saw only
\emph{candidates} (a zero-shot natural-language-inference medical score of at least 0.3 or a Vela label of health, biology, or psychology),
on WildChat also the other messages of their conversations, and a 2\% random sample of the rest; messages no judge saw count as non-medical.
Against stratified gold sets, the judges' precision was 0.75 [0.63, 0.84] on LMSYS, where the author labeled 150 items, and 0.70 on
WildChat, where the AI coding assistant used in this study (Claude), not the author, assigned the 400 reference labels, not blind,
from at most 300 characters of each message and 120 of the previous one, with terminal lines capped at 330 (judge: up to 1,000
and 300; Section~\ref{sec:ethics}). Both fall short of the pre-registered 0.85, so medical results are exploratory. The assistant also
judged the spot checks of zero-shot labels (score $\ge 0.8$) of E-3 and E-4, from at most 200 characters each, which gave E-4's
precision of 0.55 (Table~\ref{tab:prereg}).

\subsection{Per-request leakage}
\label{sec:perreqdef}
For a category $c$, let $b^{(c)}_i\in\{1,\dots,10\}$ be the decile of $\log n_i$ with edges taken from the requests of $c$. Write
$p_c = P(R{=}1\mid c)$, $p_O = P(R{=}1\mid O)$ ($=q$ by construction for the two score routers), $w_{c,k} = P(b^{(c)}{=}k\mid c)$, $p_{O,k} = P(R{=}1\mid O,\, b^{(c)}{=}k)$, and let $K_c$ be the bins that
contain comparison requests. The \emph{length-matched gap} is
\begin{equation}
\dadj(c) = p_c - \tilde p_O(c), \qquad \tilde p_O(c) = \frac{\sum_{k\in K_c} w_{c,k}\, p_{O,k}}{\sum_{k\in K_c} w_{c,k}},
\label{eq:dadj}
\end{equation}
the difference between the group's strong rate and that of comparison requests with the group's length profile. The
\emph{per-request privacy loss at matched length} is
\begin{equation}
\varepsilon(c) = \max\left\{\left|\log\frac{p_c}{\tilde p_O(c)}\right|,\; \left|\log\frac{1-p_c}{1-\tilde p_O(c)}\right|\right\},
\label{eq:eps}
\end{equation}
the maximum absolute log-likelihood ratio between the two Bernoulli distributions after averaging over length. Opposing effects
across bins can cancel, so zero gap or zero $\varepsilon$ does not establish conditional independence given length. A maximum over
bins would give a worst-bin distributional bound, not differential privacy for individual inputs. Earlier result files, including those
behind Fig.~\ref{fig:perreq}b, store an $\varepsilon$ against the unmatched comparison rate (fields \code{eps\_per\_request},
\code{eps\_health}, \code{eps}); we recompute matched-length values from $p_c$ and $\dadj$. The
\emph{gain over token counts} is
\begin{equation}
\Delta\mathrm{AUC}(c) = \mathrm{AUC}(\ell, R) - \mathrm{AUC}(\ell),
\label{eq:dauc}
\end{equation}
where $\ell$ is the one-hot encoding of 20 quantile bins of $\log n$ and each AUC is that of a logistic regression (Newton's method,
$L_2$ penalty $10^{-4}$) under 5-fold cross-validation, on the requests of $c$ and a 20:1 random subsample of $O$.
The joint model adds one routing coefficient, without routing-by-length interactions; its gain does not exhaust what a
length-aware observer could infer.

\begin{remark}[single decision]
\label{rem:single}
An observer who sees only $R$ distinguishes $c$ from $O$ with $\mathrm{AUC}_R = \tfrac12 + \tfrac12|p_c - p_O|$.
\end{remark}
\begin{IEEEproof}
Without loss of generality $p_c\ge p_O$ (otherwise score by $1-R$). Then
$\mathrm{AUC} = P(R_c > R_O) + \tfrac12 P(R_c = R_O) = p_c(1-p_O) + \tfrac12[p_c p_O + (1-p_c)(1-p_O)] = \tfrac12 + \tfrac12(p_c-p_O)$.
\end{IEEEproof}
\noindent For RouteLLM on WildChat medical requests, $|p_c-p_O| = 0.308$ gives 0.654; the cross-validated AUC is 0.651.

\begin{remark}[ceiling]
\label{rem:ceiling}
$\Delta\mathrm{AUC}$ depends on what length already reveals. In a latent binormal-score approximation, given the class, length
($\mathrm{AUC}(\ell)\ge\tfrac12$) and the decision (AUC $\tfrac12+\tfrac12|\dadj|$ of Remark~\ref{rem:single} at every length) are
independent equal-variance normal scores, whose optimal combination gives
\begin{equation}
\Delta\mathrm{AUC} \approx \Phi\!\left(\sqrt{z_\ell^2 + z_\Delta^2}\right) - \Phi(z_\ell),
\label{eq:ceiling}
\end{equation}
with $z_\ell = \Phi^{-1}(\mathrm{AUC}(\ell))$ and $z_\Delta = \Phi^{-1}(\tfrac12 + \tfrac12|\dadj|)$,
which is strictly decreasing in $\mathrm{AUC}(\ell)$ for a fixed nonzero gap.
\end{remark}
\noindent In a post hoc check on the nine category-by-split cases on LMSYS, (\ref{eq:ceiling}) predicts the observed $\Delta\mathrm{AUC}$ to within 0.013.
Under this approximation, a small $\Delta\mathrm{AUC}$ can coexist with a nonzero routing gap when length is informative; it does
not by itself identify the reason for a small measured increment. We report $\dadj$ and $\varepsilon$ as matched-length summaries
and $\Delta\mathrm{AUC}$ as the specified attacker's gain over token counts, not a distribution-free information bound.

\subsection{Estimation and replication}
\label{sec:estimation}
On WildChat we resample users with replacement (1,000 draws) and recompute (\ref{eq:dadj}) from per-user counts by length bin, so
that requests of the same user are not treated as independent. LMSYS has no user identifiers, so the pre-registered intervals resample
requests. Public chat logs repeat prompts, so we also resample fingerprint clusters. The fingerprint of request $i$ is
$\phi_i = (s^{\mathrm{RL}}_i, s^{\mathrm{ND}}_i, n_i)$, with both scores rounded to $10^{-6}$, a proxy for text identity rather than
a text-identity test. Throughout, \emph{distinct} and \emph{never-seen} refer to this proxy: the latter keeps held-out category
requests whose fingerprint occurs nowhere in exploration. Its intervals resample fingerprint clusters, with the comparison set (about
480,000 requests) held fixed. The two co-primary endpoints of the held-out replication use Bonferroni-adjusted 97.5\% intervals.

\medskip\needspace{4\baselineskip}\noindent\textbf{Procedure 1 (per-request leakage).}
\begin{enumerate}
\item Set $\tau_q$ on $O$ and compute every $R_i$; bin $c$ and $O$ by the deciles of $\log n$ over $c$; compute $\dadj(c)$ by
(\ref{eq:dadj}) and $\varepsilon(c)$ by (\ref{eq:eps}).
\item Resample as above (on RouterBench, the group's and the comparison set's prompts separately; 1,000 or 2,000 draws), recompute,
and take percentile intervals.
\item Compute $\Delta\mathrm{AUC}(c)$ by (\ref{eq:dauc}) on $c$ and the 20:1 subsample of $O$.
\end{enumerate}

The original studies in Table~\ref{tab:prereg} were pre-registered: their protocols, endpoints, thresholds, and predictions were committed to version control
before its data were downloaded or collected or, if it reused data, before its new step (E-5: judge labels; E-8: parity and budget
defenses, its two stickiness variants having been measured post hoc in E-5b). The WildChat
studies reused one dataset, each frozen after its predecessor's results were known. Deviations are logged; Table~\ref{tab:prereg} lists
every primary and co-primary endpoint with its freeze date. The odd-position follow-up is post hoc and exploratory, not a new
pre-registered endpoint; its comparisons were fixed before execution, after the exact-rate results and transcript identity were known.

\subsection{User-level attackers}
\label{sec:attackers}
Users with at least 10 requests are split into halves by a hash of their identifier. A user is positive if at least 20\% of their
requests belong to $c$, negative if none do, and is otherwise excluded. Training positives and negatives define the strong rates $a$ and
$b$ over all of their requests. The attacker sees the first $T$ records of a test user $u$,
the $t$-th in time order with decision $R_{u,t}$ and token count $n_{u,t}$. The \emph{bill-only} attacker uses the strong
count $k_u = \sum_{t\le T} R_{u,t}$:
\begin{equation}
\mathrm{A1}(u) = k_u \log\frac{a}{b} + (T-k_u)\log\frac{1-a}{1-b}.
\label{eq:a1}
\end{equation}
A1+L adds the mean log length in a logistic regression fitted on the training users. The \emph{itemized-log} attacker uses each record:
\begin{multline}
\mathrm{A2}(u) = \sum_{t\le T} \biggl[R_{u,t}\log\frac{\hat p_{1,\beta_t}}{\hat p_{0,\beta_t}}\\
+ (1-R_{u,t})\log\frac{1-\hat p_{1,\beta_t}}{1-\hat p_{0,\beta_t}}\biggr],
\label{eq:a2}
\end{multline}
where $\beta_t$ is the global decile of $\log n_{u,t}$ and $\hat p_{y,k} = (\sum R + \tfrac12)/(\sum 1 + 1)$ is computed over all
requests in decile $k$ of training users with label $y$ (1 positive, 0 negative). The \emph{topic-only null} predicts A1's AUC if each
request's strong probability depended only on its class ($c$, $O$, or $U$), estimated on training users and simulated 200 times.
Between-user heterogeneity is measured by the intra-class correlation $\rho$ of the routing decisions within each user label, over all of
the users' requests (ANOVA estimator for binary outcomes). \emph{Billing} observers see only the strong count, the total tokens, or the total
cost $\sum_{t\le T}\mathrm{cost}_{u,t}$, with a stylized $\mathrm{cost}_i = n_i\lambda^{R_i}$ for a strong-to-weak price ratio
$\lambda\in\{20,50\}$.

\medskip\noindent\textbf{Post hoc odd-position attacker (A2-odd).}
For the exact-rate rule and cheap-classifier parity followed by that rule, A2-odd uses (\ref{eq:a2}) restricted to odd positions:
both probability estimation on training users and scoring within a test user's first $T$ requests use only $t=1,3,\ldots$.
Thus $T\in\{5,10,20,50\}$ denotes total requests, not retained bits; the attacker uses $3,5,10,25$ decisions, respectively.
We retain the E-8 defense-study split, full-record labels, global length bins, original router thresholds, and probability smoothing.
Each router--defense pair has one training fit shared across horizons, using 409 positive and 6,520 negative training users.
The observer knows the complete ordered prefix and its starting phase. With parity first, the preserved input is parity's output,
not the raw router decision. We reproduce original A2 on the same test users before comparing it with A2-odd; there is no test-set
sign reversal or model selection. The AUC difference uses 1,000 paired, label-stratified bootstrap resamples of test users
(seed 20260928), holding training fits fixed. The 95\% percentile intervals are unadjusted, descriptive, and conditional on those fits.

\begin{remark}[saturation]
\label{rem:saturation}
If, given class $y$, a user's decisions are independent Bernoulli draws with a user-specific rate $\pi_u\sim F_y$ of mean $0<\mu_y<1$
and $\rho_y=\mathrm{Var}(\pi_u)/[\mu_y(1-\mu_y)]$, then $k_u/T\to\pi_u$ as $T\to\infty$. For $a\ne b$, (\ref{eq:a1}) is strictly monotone
in $k_u$ and A1's AUC tends to $P(\sigma\pi_1>\sigma\pi_0)+\tfrac12P(\pi_1=\pi_0)$, $\sigma=\mathrm{sign}(a-b)$, $\pi_y\sim F_y$ independent.
If $\rho_0=\rho_1=0$, the $F_y$ are point masses and the limit is one if $\sigma(\mu_1-\mu_0)>0$; with heterogeneity, below one if
$P(\sigma\pi_1\le\sigma\pi_0)>0$.
\end{remark}

\subsection{Defenses}
\label{sec:defdef}
A defense post-processes the router's scores into decisions $R'$, using, where noted, a category label and per-user state. Let
$B_1,\dots,B_{10}$ be the global deciles of $\log n$ and $r_k$ the mean of $R$ over $B_k$.
\begin{itemize}
\item \emph{Mixing} (D1, rate $\eta$): $R'_i = R_i$ with probability $1-\eta$; otherwise $R'_i$ is a fair coin.
\item \emph{Length-matched parity} (D2) over a partition $\mathcal{P}$ of the requests: for each bin $k$ and part $P\in\mathcal{P}$, let
$m_{P,k} = \mathrm{round}(r_k\,|P\cap B_k|)$, and set $R'_i=1$ for the $m_{P,k}$ highest-scoring requests of $P\cap B_k$.
\emph{Per-category} parity uses $\mathcal{P} = \mathcal{C}\cup\{O\cup U\}$, \emph{pooled} parity $\mathcal{P} = \{\bigcup\mathcal{C}\cup U,\, O\}$ ($U$ is empty on RouterBench). On LMSYS, whose moderation flags
can overlap, per-category parity is applied to one category $c$ at a time, with parts $c$ and all other requests. The category label comes from a classifier, either
the evaluation labels (perfect labels) or a cheap one.
\item \emph{Budget band} (D3, slack $\delta$, target $Z=\tfrac12$): process each user's requests in order. Let $k$ be the strong count
before request $t$ and $R$ the router's decision. Set $R'=0$ if $R=1$ and $k+1-Zt>\delta$; set $R'=1$ if $R=0$ and $k-Zt<-\delta$;
otherwise $R'=R$. The \emph{exact-rate} variant uses $\delta=\tfrac12$.
\item \emph{Stickiness}: the first request's decision is reused for the whole conversation (D4) or for all of the user's requests (D5).
\end{itemize}

\begin{proposition}[per-category parity]
\label{prop:parity}
Per-category parity (i) preserves each bin's budget up to rounding, (ii) gives every part the strong rate $r_k$ within every bin up to
rounding, so that on these bins any category's gap against $O\cup U$ vanishes up to rounding, and (iii) routes an upward-closed
top-score set in each nonempty cell $P\cap B_k$ (cutoff ties: later rows first).
\end{proposition}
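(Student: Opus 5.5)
The proof works directly from the definition of D2, cell by cell. The cells $P\cap B_k$ for $P\in\mathcal{P}$ and $k=1,\dots,10$ partition the requests, because $\mathcal{P}$ is a partition and the global deciles $B_k$ are disjoint. Inside each cell the rule is deterministic: it marks exactly $m_{P,k}=\mathrm{round}(r_k|P\cap B_k|)$ requests. This count is admissible, since $0\le r_k\le 1$ implies $0\le m_{P,k}\le |P\cap B_k|$. All three claims then come from counting per cell and summing over parts, with a uniform rounding error of at most $\tfrac12$ per cell.

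\textbf{Claim (i).} In bin $k$, the router's original number of strong decisions is $r_k|B_k|=\sum_{P}r_k|P\cap B_k|$, because $r_k$ is the mean of $R$ over $B_k$. Parity's number is $\sum_P m_{P,k}$, and each term differs from $r_k|P\cap B_k|$ by at most $\tfrac12$. The bin budget is therefore preserved to within $|\mathcal{P}|/2$, which is $(|\mathcal{C}|+1)/2$ for per-category parity and $1$ in the one-category-at-a-time LMSYS variant. Summing over $k$ gives the global budget to within $10|\mathcal{P}|/2$.

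\textbf{Claim (ii).} For each nonempty cell, the strong rate is $m_{P,k}/|P\cap B_k|=r_k+e_{P,k}$ with $|e_{P,k}|\le 1/(2|P\cap B_k|)$. This is the same $r_k$ for every part, so any category part and the part $O\cup U$ share the per-bin rate up to rounding.

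The step I expect to need the most care is translating this into the gap of (\ref{eq:dadj}). That gap uses category-specific deciles $b^{(c)}$ and compares against $O$ alone, whereas parity equalizes on the global bins $B_k$ against $O\cup U$. I would state the vanishing for the gap computed on the bins where parity acts, as the proposition says. Write $\dadj$ on global bins with weights $w_{c,k}$ taken from $c$. The category's rate is $\sum_k w_{c,k}(r_k+e_{c,k})$ and the reweighted comparison rate is $\sum_k w_{c,k}(r_k+e_{O\cup U,k})$, summed over bins where both cells are nonempty. Their difference is a convex combination of $e_{c,k}-e_{O\cup U,k}$, so it is bounded by $\max_k\bigl(1/(2|c\cap B_k|)+1/(2|(O\cup U)\cap B_k|)\bigr)$, which is the rounding error. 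I would add a remark that against $O$ alone, or on the $c$-specific deciles, exact vanishing is not implied.

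\textbf{Claim (iii).} Within a cell, order the requests by score in descending order, breaking ties at equal scores by placing later rows first. Parity marks the first $m_{P,k}$ requests in this total order. Suppose $i$ is marked and $j$ in the same cell has $s_j>s_i$. Then $j$ precedes $i$ in the order and is also marked, so the marked set is upward closed in score. For equal scores the stated tie rule determines membership uniquely, which makes the rule well defined. Empty cells contribute nothing and are excluded in every claim.
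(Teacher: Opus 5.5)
Your proof is correct and follows the same route as the paper's: the parts partition each global bin $B_k$, so the summed rounded counts differ from $r_k|B_k|$ only by rounding, and (ii) and (iii) follow from the construction of D2. You make explicit what the paper leaves as ``by construction'': the per-bin error of at most $|\mathcal{P}|/2$, the convex-combination bound on the gap over bins where both cells are nonempty, and the caveat that the gap on $c$'s own deciles or against $O$ alone need not vanish, which the paper states right after the proposition.
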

\begin{IEEEproof}
(i) The parts partition $B_k$, so $\sum_P m_{P,k} = \sum_P \mathrm{round}(r_k|P\cap B_k|)$ differs from $r_k|B_k|$ only by
rounding. (ii) and (iii) hold by construction.
\end{IEEEproof}
\noindent The $\dadj$ reported after parity (Tables~\ref{tab:defrb} and~\ref{tab:defwild}) can keep a residual because it uses the
category's own deciles (up to 0.064 on RouterBench, where $U$ is empty) and, on WildChat, compares with $O$ rather than $O\cup U$.

\begin{remark}[pooling]
\label{rem:pooling}
Pooled parity constrains only the weighted average of category selection rates within each bin. At the pooled cutoff, individual
categories may remain above or below the comparison rate even when the pooled gap is zero. Their post-parity directions are not
determined by their undefended gaps alone, because category score distributions can cross.
\end{remark}

\begin{remark}[boundary leak]
\label{rem:band}
Under an iid Bernoulli input model with rate $\pi$, the reflected count process favors the lower boundary when $\pi<Z$ and the upper
boundary when $\pi>Z$, accounting for its even--odd phases. This explains a possible leak through boundary occupancy, not a guarantee
for arbitrary correlated request sequences. Section~\ref{sec:band} reports the empirical boundary distributions.
\end{remark}

\begin{remark}[exact-rate transcript]
\label{rem:exact}
For the exact-rate rule ($Z=\delta=\tfrac12$), initialized at $k_0=0$, let $R_t$ be its binary input decision and $R'_t$ its output.
For each pair, $R'_{2j-1}=R_{2j-1}$ and $R'_{2j}=1-R_{2j-1}$. Thus $k_{2m}=m$, but for a fixed complete ordered prefix of length
$2m$ and any user label $Y$,
\[
I(Y;R'_{1:2m})=I(Y;R_1,R_3,\ldots,R_{2m-1}),
\]
where $I$ is mutual information. At odd $T$, $k_T-(T-1)/2=R_T$.
\end{remark}
\begin{IEEEproof}
Inductively, $k_{2j-2}=j-1$ makes both choices feasible at request $2j-1$, whereas only its complement is feasible at $2j$.
Hence $k_{2j}=j$. Odd-coordinate projection and deterministic complementation are inverse maps between the two transcripts,
giving the information identity without an independence assumption.
\end{IEEEproof}
\noindent The count at even $T$ is uninformative, but the ordered log preserves the odd-indexed input decisions verbatim.
A2 does not model position; its measured performance is not an upper bound on residual inference.
Section~\ref{sec:oddattack} tests the position-aware observer above.

\medskip\needspace{4\baselineskip}\noindent\textbf{Procedure 2 (per-category parity).}
\begin{enumerate}
\item Compute the undefended decisions $R$ and the global length deciles $B_k$; set $r_k$ to the strong rate in $B_k$.
\item Form one part per sensitive category and one of all other requests ($O\cup U$); in every part $P$ and bin $k$, send the
$\mathrm{round}(r_k\,|P\cap B_k|)$ highest-scoring requests to the strong model and the rest to the weak model.
\end{enumerate}
\noindent An online version also needs calibrated score cutoffs for each category--length cell, not just target rates; it does not
inherit the batch rule's finite-sample parity under distribution shift. Our LMSYS evaluations treat categories separately and do not
establish simultaneous protection for overlapping categories. No per-user state or router retraining is needed for this post-processing.

\medskip
The router's own view of utility is $V(R) = \sum_i R_i(2s_i-1)$, its expected count of strong-routed requests where the strong
model is the better choice minus those where it is not (for RouteLLM, ties count as not). We report the
\emph{skill kept}
\begin{equation}
\mathrm{SK}(R') = \frac{V(R') - \bar R' N \bar g}{V(R) - \bar R N \bar g}, \qquad \bar g = \frac{1}{N}\sum_i (2s_i-1),
\label{eq:skill}
\end{equation}
with budgets $\bar R = N^{-1}\sum_i R_i$ and $\bar R'$. Its denominator, $\sum_i R_i(2s_i-1-\bar g)$, is positive for $0<\bar R<1$, as
$R$ routes the top scores. $\mathrm{SK}(R)=1$, and same-budget random routing has SK 0 only in expectation. It is a
self-assessment; Section~\ref{sec:realunits} checks it against real accuracy for per-request defenses.

\subsection{Real-unit evaluation}
\label{sec:realdef}
On RouterBench~\cite{hu2024routerbench}, $a^S_i, a^W_i\in[0,1]$ are the strong and weak models' scores on prompt $i$, and
$\mathrm{cost}^S_i, \mathrm{cost}^W_i$ their dollar costs. For decisions $R$ with budget $\bar R$,
\begin{align}
\mathrm{Acc}(R) &= \frac1N\sum_i \left[R_i a^S_i + (1-R_i)a^W_i\right], \label{eq:acc}\\
\mathrm{Cost}(R) &= \frac{1000}{N}\sum_i \left[R_i \mathrm{cost}^S_i + (1-R_i)\mathrm{cost}^W_i\right], \label{eq:cost}
\end{align}
and random routing at the same budget attains $\bar R\,\bar a^S + (1-\bar R)\,\bar a^W$. The post hoc \emph{oracle} routes by
realized accuracy gain: it sends the $\bar R N$ prompts with the largest $a^S_i-a^W_i$ to the strong model, breaking the many ties at
random. It is one quality-optimal router, not the only one: with per-category parity it keeps its accuracy, and its largest gap falls
to 0.031 or less (Table~\ref{tab:defrb}). We therefore compare gaps in points, not as ratios.

\section{Per-Request Leakage}
\label{sec:perreq}

This section reports how differently the router treats each sensitive category from comparable requests on real traffic
(Table~\ref{tab:perreq}, Fig.~\ref{fig:perreq}), whether the differences survive repeated prompts, and how they compare with an accuracy
oracle (Fig.~\ref{fig:excess}).

\begin{table*}[!t]
\centering
\caption{Per-request leakage at the 50\% operating point (router sends 50\% of non-sensitive requests to the strong model).
$\Delta_{\mathrm{adj}}$: strong-routing rate of the group minus the length-reweighted rate of the comparison set, with 95\% CI
(request bootstrap on LMSYS, user-cluster bootstrap on WildChat). $\varepsilon$: per-request privacy loss at matched length
(\ref{eq:eps}). $\Delta$AUC: observer-AUC gain of the routing decision over a flexible 20-bin length model.}
\label{tab:perreq}
\small
\setlength{\tabcolsep}{4pt}
\begin{tabular}{@{}lllrrlrrr@{}}
\toprule
 & & & & \multicolumn{4}{c}{RouteLLM} & notdiamond \\
\cmidrule(lr){5-8}\cmidrule(l){9-9}
Category & Data & Status & $n$ & $P(\mathrm{strong})$ & $\Delta_{\mathrm{adj}}$ [95\% CI] & $\varepsilon$ & $\Delta$AUC & $\Delta_{\mathrm{adj}}$ \\
\midrule
Harassment & LMSYS held-out, all & confirmatory & 7,140 & 0.187 & $-0.316$ [$-0.325$, $-0.307$]$^\dagger$ & 0.99 & $+0.014$ & $-0.276$ \\
Harassment & LMSYS held-out, new$^\S$ & post hoc & 1,786 & 0.351 & $-0.193$ [$-0.216$, $-0.169$]$^\dagger$ & 0.44 & $+0.010$ & $-0.166$ \\
Harassment & LMSYS exploration & exploratory & 3,563 & 0.179 & $-0.327$ [$-0.338$, $-0.316$] & 1.04 & $+0.013$$^\P$ & $-0.289$ \\
\addlinespace
Self-harm & LMSYS held-out, all & confirmatory & 276 & 0.297 & $-0.238$ [$-0.289$, $-0.190$]$^\dagger$ & 0.59 & $+0.061$ & $-0.097$ \\
Self-harm & LMSYS held-out, new$^\S$ & post hoc & 164 & 0.342 & $-0.186$ [$-0.258$, $-0.116$]$^\dagger$ & 0.43 & $+0.050$ & $-0.028$ \\
Self-harm & LMSYS exploration & pre-registered & 158 & 0.361 & $-0.171$ [$-0.243$, $-0.099$] & 0.39 & $+0.033$$^\P$ & $-0.133$ \\
\addlinespace
Medical & WildChat & pre-registered$^\ddagger$ & 32,215 & 0.192 & $-0.331$ [$-0.353$, $-0.308$] & 1.00 & $+0.107$$^\P$ & $-0.154$ \\
Medical & LMSYS exploration & exploratory$^\ddagger$ & 5,471 & 0.423 & $-0.162$ [$-0.172$, $-0.152$] & 0.33 & $+0.033$$^\P$ & $-0.219$ \\
\addlinespace
Sexual & LMSYS held-out, all & secondary & 7,713 & 0.803 & $+0.102$ [$+0.094$, $+0.110$] & 0.42 & $+0.006$ & $-0.054$ \\
Violence & LMSYS held-out, all & secondary & 1,759 & 0.735 & $+0.071$ [$+0.054$, $+0.088$] & 0.24 & $+0.004$ & $-0.054$ \\
\bottomrule
\end{tabular}

\smallskip
\footnotesize $^\dagger$97.5\% CI (Bonferroni over the two co-primary endpoints).
$^\S$New: never-seen rows (fingerprint absent from the exploration files), 2,750 harassment and 190 self-harm
rows; $n$: their distinct prompts; CI: bootstrap over distinct prompts;
$\Delta$AUC (not on this subset): all distinct held-out prompts (3,284 and 212).
$^\ddagger$LLM-judge labels that failed their pre-registered gate; in three label-error scenarios, simulated only for RouteLLM on
LMSYS, the sign held and the magnitude varied (Section~\ref{sec:limits}). $^\P$Post hoc, 20-bin length model; frozen log-length values:
$+0.178$, $+0.088$, $+0.169$, and $+0.042$ in row order.
\end{table*}

\begin{figure*}[!t]
\centering
\includegraphics[width=0.96\textwidth]{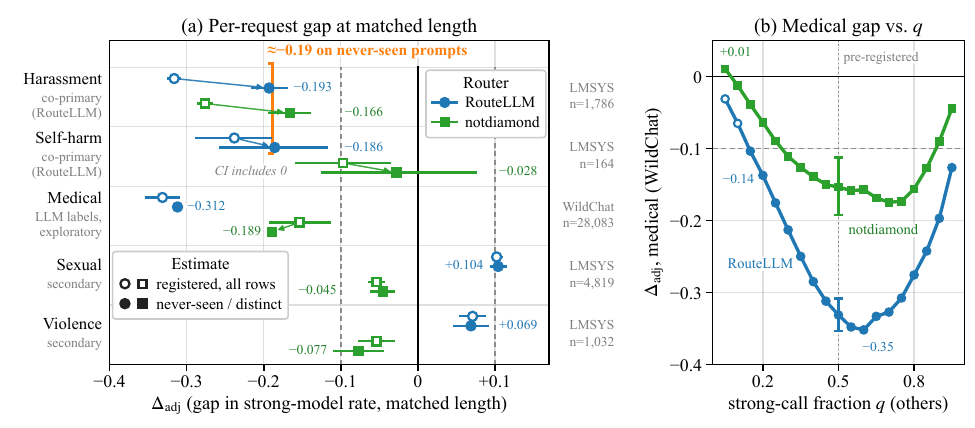}
\caption{The direction of per-request leakage depends on the category and the router, including on prompts never seen during exploration, and RouteLLM's WildChat medical gap is negative at all 19 operating points.
(a)~$\Delta_{\mathrm{adj}}$ (strong-routing rate of the category minus the length-reweighted rate of the comparison set) at the 50\% operating point; blue circles: RouteLLM, green squares: notdiamond.
Hollow markers are the pre-registered all-rows estimates, with intervals of the same type as in Table~\ref{tab:perreq} (which lists them for RouteLLM only). Harassment and self-harm: LMSYS held-out files; confirmatory co-primary endpoints for RouteLLM, secondary for notdiamond; 97.5\% request-bootstrap CI for both routers. Sexual and violence: LMSYS held-out, secondary, 95\% request-bootstrap CI. Medical: WildChat, 95\% user-cluster bootstrap CI; pre-registered but exploratory, because the LLM-judge labels failed their pre-registered validity gate.
Filled markers are post hoc. On LMSYS they use only prompts whose fingerprint never occurs in the exploration files, with a bootstrap over distinct prompts and the comparison set held fixed (97.5\% for harassment and self-harm, 95\% otherwise). On WildChat medical they use one row per distinct prompt, with a 95\% bootstrap over distinct prompts; that interval ignores user clustering and holds the comparison set fixed, so it is narrower than the hollow one.
Arrows run from all rows to the never-seen or distinct subset where the shift is visible; deduplication does not always shrink a gap (notdiamond, medical). Gray $n$: distinct prompts in the filled subset. Dashed lines: $\pm 0.10$, the magnitude threshold of the registered criteria.
The orange bracket marks RouteLLM's two co-primary gaps on never-seen prompts ($-0.193$ and $-0.186$). On those prompts, notdiamond's self-harm interval includes zero.
(b)~WildChat medical $\Delta_{\mathrm{adj}}$ against the fraction $q$ of comparison requests sent to the strong model (19 thresholds; $q = 0.2$, $0.5$, and $0.8$ were fixed in the protocol, $q = 0.5$ as the primary point; the other 16 are post hoc). The two hollow RouteLLM points ($q = 0.05$, $0.10$) are where medical requests almost never reach the strong model ($P(\mathrm{strong}) = 0.001$ and $0.007$).
Only $q = 0.5$ carries an interval (95\% user-cluster bootstrap); no CIs exist elsewhere on the curve. RouteLLM's gap shrinks toward both ends, but its per-request privacy loss $\varepsilon$ does not ($1.18$ at $q = 0.2$ against $1.00$ at $0.5$); notdiamond's gap reaches $+0.01$ at $q = 0.05$.
On LMSYS at the 20\% and 80\% operating points (no CIs), RouteLLM keeps its sign in all four categories, while notdiamond turns positive at 80\% in all four (harassment $+0.116$, self-harm $+0.043$, sexual $+0.026$, violence $+0.057$).}
\label{fig:perreq}
\end{figure*}

\subsection{The shift's direction depends on the category and the router}
At the pre-registered 50\% operating point, RouteLLM sends harassment and self-harm requests to the weak model more often than
length-matched comparison requests (Fig.~\ref{fig:perreq}a). On all held-out rows the gaps are $-0.316$ and $-0.238$; both co-primary endpoints of
the replication are met. Medical requests on WildChat, with exploratory LLM labels, show the largest gap, $-0.331$. Sexual and violent
requests, both secondary endpoints, show the opposite direction: they reach the strong model \emph{more} often ($+0.102$ and $+0.071$);
for sexual requests the protocol had predicted a gap below 0.10 in magnitude, which the estimate narrowly exceeds. The second
cost/quality router, notdiamond, sends every category to the weak model more often: it agrees in sign for harassment, self-harm, and
medical requests but disagrees for sexual and violent content ($-0.054$ for both), so the direction is a property of the
category-router pair, not of sensitivity as such. The gap depends on the budget (Fig.~\ref{fig:perreq}b): on WildChat it peaks at $-0.35$ when 55--60\% of
comparison requests reach the strong model, and shrinks toward the extremes ($-0.14$ at 20\%, $-0.13$ at 95\%).

\subsection{Repeated prompts inflate the replication}
By the fingerprint proxy of Section~\ref{sec:estimation}, 76\% of held-out harassment rows belong to a prompt that occurs more than once, and 61\% repeat a prompt
that already appears in the exploration files, so the pre-registered replication is less independent than it appears. On prompts that
never occur in the exploration files, a post hoc subset, the gaps shrink (the arrows in Fig.~\ref{fig:perreq}a) to $-0.193$ for harassment and $-0.186$ for self-harm. Both still meet the
pre-registered criterion (a gap of at least 0.10 whose 97.5\% interval excludes zero). For notdiamond, the harassment gap also survives
($-0.166$), but the self-harm gap, already short of 0.10 on all rows ($-0.097$), falls to $-0.028$ with an interval that includes zero.
WildChat medical requests are 88\% distinct, and their gap is essentially unchanged on
distinct prompts ($-0.312$ against $-0.331$). We treat the never-seen estimates as the robust ones and report the pre-registered estimates alongside.

\subsection{What routing adds to token counts}
On never-seen prompts, RouteLLM's per-request privacy loss at matched length (\ref{eq:eps}) is about 0.4 for harassment, self-harm, and sexual
requests, and about 1.0 for medical requests on WildChat (0.97 on distinct prompts). What the decision adds to token counts
($\Delta$AUC, 20-bin length model) is a different quantity. For self-harm the gain is moderate ($+0.05$ to $+0.06$) and for medical
requests on WildChat it is largest ($+0.11$). For harassment and sexual requests it is about $+0.01$, while length alone already
separates them (AUC 0.78 and 0.75 on distinct prompts). This is consistent with the ceiling effect illustrated by
Remark~\ref{rem:ceiling}, but does not establish that explanation or exhaust length-dependent routing interactions.
The frozen analyses of WildChat and of the LMSYS exploration files used a
log-linear length model, which misses non-monotone length profiles such as that of harassment. It gave a $+0.18$ gain for harassment,
which we withdraw; Table~\ref{tab:perreq} shows post hoc 20-bin values for those rows and lists the frozen ones. Based on that figure,
the held-out replication had predicted (secondary) a 20-bin gain of at least $+0.10$ for harassment; it measured $+0.014$, so the
prediction failed (self-harm: $\ge +0.05$ predicted, $+0.061$ measured).

\begin{figure}[!t]
\centering
\includegraphics[width=\columnwidth]{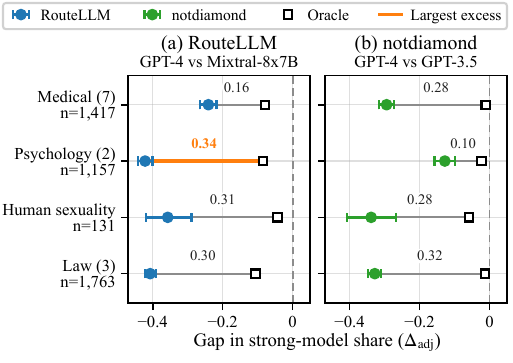}
\caption{Both routers differentiate sensitive MMLU subjects far more than an oracle that routes by realized accuracy gain at the same budget (RouterBench, MMLU subjects). Filled dots: each router's length-matched gap $\dadj$ of (\ref{eq:dadj}) in strong-model share between the group and the non-sensitive prompts (those in none of the four groups), with 95\% percentile-bootstrap CIs over prompts (2,000 resamples); these gaps are pre-registered (E-8b measurement M2), not a primary endpoint. Hollow squares: an oracle with the router's budget that sends the prompts with the largest realized accuracy gain to the strong model, breaking ties at random. The oracle, and hence the excess, is post hoc and uses realized per-prompt correctness; it is one router that maximizes accuracy at that budget, not a deployable router. Its gaps come from a single random tie-break and have no CI. Numbers give the absolute excess $|\dadj(\text{router})-\dadj(\text{oracle})|$, 0.10--0.34; orange marks the largest (RouteLLM, psychology: $-0.422$ vs.\ $-0.085$). The routers' gaps are 13--42 points and the oracle's 1--11 points; we compare them in points, not as ratios, because the oracle's smallest gaps ($-0.011$ and $-0.012$) are close to zero. Against Mixtral-8x7B, part of the oracle's gap reflects accuracy: the strong model's mean accuracy gain on these subjects is 0.11, 0.12, 0.05, and 0.08 lower than on non-sensitive prompts (exploratory; not plotted). Parentheses give the number of MMLU subjects per group; human sexuality ($n=131$) has the widest CIs. Groups are academic subjects, not users' own questions. Sources: \texttt{E-8b/results.json}, \texttt{E-8b/posthoc\_e8b.json}.}
\label{fig:excess}
\end{figure}

\subsection{The routers' gaps far exceed an accuracy oracle's}
Real traffic has no ground truth for whether a request needed the strong model; RouterBench does. On its MMLU subjects, both routers
send medical, psychology, human-sexuality, and law questions to the weak model 13--42 points more often than other prompts
(pre-registered E-8b measurement M2; Fig.~\ref{fig:excess}, Table~\ref{tab:rbleak}). Part of this follows from accuracy: against
Mixtral-8x7B the strong model's realized advantage is smaller on these subjects, by 0.05--0.12 accuracy (gain gap, exploratory), so an
oracle that routes by realized accuracy gain at the same budget (post hoc) also sends them to the weak model more
often. Its gaps are only 4--11 points against Mixtral-8x7B and 1--6 points against GPT-3.5, however, and the routers' gaps exceed them by
10--34 points (post hoc): most of the observed differentiation is not explained by accuracy gain.

\section{User-Level Inference}
\label{sec:userlevel}

This section asks what a user's sequence of routing decisions reveals, using WildChat medical requests, the only sensitive category
whose public data carry a user identifier (Fig.~\ref{fig:userlevel}).

\begin{figure}[!t]
\centering
\includegraphics[width=\columnwidth]{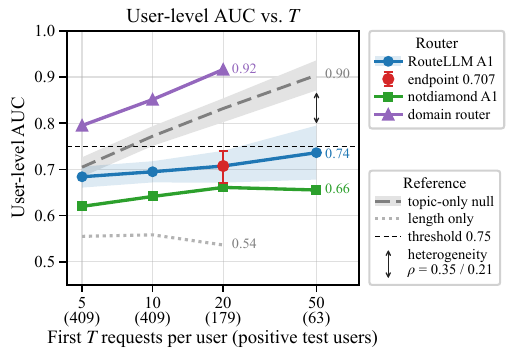}
\caption{User-level AUC rises modestly over the evaluated horizons and remains below the topic-only null; the pre-registered endpoint is missed. With a user's first 20 requests, the bill-only attacker A1 (\ref{eq:a1}) on RouteLLM reaches AUC 0.707 (red; 95\% CI [0.671, 0.740]), below the pre-registered $T{=}20$ threshold of 0.75 (black dashed line). Data: WildChat-1M users with at least 10 requests, on the pre-registered E-5 user split; positives have at least 20\% medical requests (LLM-judge labels), negatives none. The numbers under the ticks count positive test users (negatives: 6,471, 6,471, 2,935, 857). The blue band and the red bar are 95\% percentile-bootstrap CIs over test users (1,000 resamples, stratified by label). The gray dashed line and light gray band show the topic-only null, a model prediction rather than data: the mean and the 2.5--97.5th percentiles of 200 simulations in which each request's strong probability depends only on its topic class. The arrow at $T{=}50$ spans the gap between the two bands, consistent with between-user heterogeneity (Remark~\ref{rem:saturation}). Eligibility changes with $T$, so the curve does not establish an asymptotic ceiling. Heterogeneity is measured as the intra-class correlation $\rho$ of each user's routing decisions: 0.35 among 13,022 negative users and 0.21 among 818 positive users. The domain router (Vela; an upper estimate, as its labels helped select what the judges read) was evaluated only for $T\le 20$. Length only is drawn for $T\le 20$ because at $T{=}50$, with only 63 positives, it falls to 0.47. A random-score control gives 0.49--0.51. The RouteLLM point at $T{=}20$ is the pre-registered primary endpoint (P2). The other $T$ values, notdiamond, the domain router, length only, the random control, the null and $\rho$ are pre-registered secondary analyses. As the LLM-judge label-validity gate failed, every medical result here is exploratory. Table~\ref{tab:userlevel} lists the plotted values.}
\label{fig:userlevel}
\end{figure}

\subsection{Accumulation over the evaluated horizons}
With the first 20 records of a user, the bill-only attacker on RouteLLM reaches AUC 0.707 [0.671, 0.740] on the pre-registered split. This pre-registered primary endpoint
(P2) misses its 0.75 threshold and is exploratory, as the medical label gate failed. Length alone gives 0.54 on the same users and count plus length 0.72, so the routing decisions add about 0.18 AUC beyond token counts. The second cost/quality router gives 0.66. The domain router, whose decision is the topic, gives 0.92, an upper estimate, because its labels helped select the candidates the judges read. In the LMSYS exploration files, it also labels 27\% of the first turns flagged as sexual and 24\% of those flagged as harassment as \emph{health}, against 3\% of unflagged, non-medical first turns. %
On the defense-study split, the bill-only attacker reaches 0.718 and the itemized-log attacker 0.739
(Table~\ref{tab:userlevel}). AUC grows slowly with $T$. The topic-only null predicts 0.83 at $T{=}20$ and 0.90 at
$T{=}50$, against 0.707 and 0.736 observed. Routing rates vary strongly between users within each label class ($\rho = 0.35$ among
negative users and 0.21 among positive ones, over all of their requests), consistent with the mechanism of
Remark~\ref{rem:saturation}. Eligibility changes with $T$, so the curve neither isolates within-cohort accumulation nor establishes
an asymptotic ceiling. The result does not depend on overlapping label and attack windows: labeling users from their
later requests and attacking their first 20 gives 0.69 [0.64, 0.73]. This is the pre-registered stable-trait co-primary endpoint, and,
like E-4's 0.743, it misses 0.75 (Table~\ref{tab:prereg}).

\subsection{An aggregate bill carries most of the signal}
An observer who sees only an aggregate bill, i.e.\ the strong count over a user's first 20 requests, reaches 0.707. A stylized total
cost, which prices each request's tokens 20 or 50 times higher on the strong model and ignores output tokens
(Section~\ref{sec:attackers}), gives 0.65 at either ratio, and total tokens give 0.57. On the defense-study split, itemized logs add about 0.02 over the count. Hiding per-request records while keeping
aggregate bills therefore offers little protection.

\section{Defenses}
\label{sec:defense}

This section evaluates the defenses of Section~\ref{sec:defdef} on WildChat in the router's own units (Fig.~\ref{fig:defense},
Table~\ref{tab:defwild}) and, for the per-request ones, in real accuracy and dollars on RouterBench (Table~\ref{tab:defrb}).

\begin{figure*}[!t]
\centering
\includegraphics[width=\textwidth]{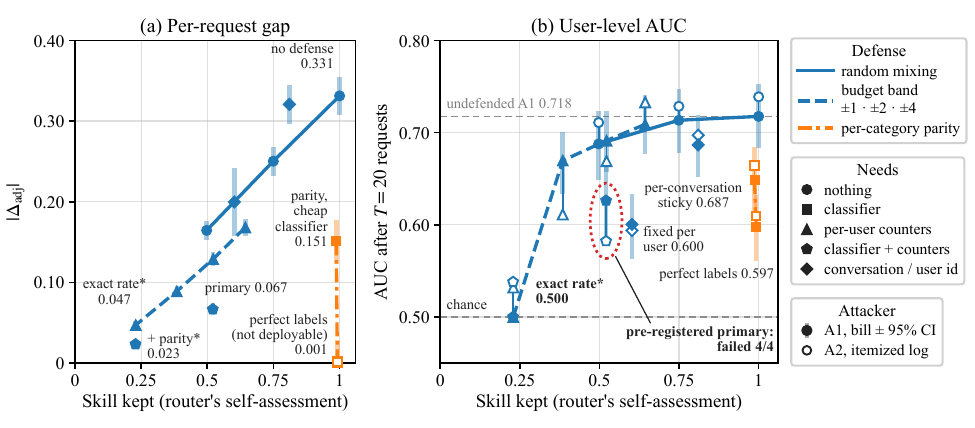}
\caption{Per-request gaps are cheap to close when labels are accurate; user-level inference is not. Each marker is RouteLLM on WildChat
(judge-labeled medical vs.\ other requests, 50\% operating point) with no defense or one post-processing defense; per-category parity
on its own is orange and all other defenses blue. Lines join no defense and random mixing ($\eta=0.25, 0.5$; solid), the per-user budget bands
($\pm4\to\pm2\to\pm1\to$ exact rate; dashed), and parity (cheap classifier $\to$ perfect labels; dash-dot). Marker shape gives what
deployment needs, as in Table~\ref{tab:defwild}; gray dashed lines in (b) mark chance and the undefended A1. The $x$ axis, skill kept,
is the router's own expected benefit relative to random routing at the same budget (0 = random, 1 = undefended). It is the router's
self-assessment, not measured quality: on RouterBench, undefended RouteLLM is 1.30 accuracy points worse than random routing at the
same budget (Section~\ref{sec:realunits}).
(a)~Per-category parity with perfect labels removes the per-request gap ($|\Delta_{\mathrm{adj}}| = 0.001$) at a skill of 0.99, but
it is not deployable; with the cheap classifier the gap stays at 0.151. Whiskers are 95\% user-cluster bootstrap CIs (1,000
resamples) shown as absolute values; the perfect-label interval spans zero, so its whisker runs from 0 to 0.028.
(b)~User-level AUC after $T{=}20$ requests for the bill-only attacker A1 (filled; whiskers are 95\% percentile bootstrap CIs over test users,
1,000 resamples) and the original position-agnostic itemized-log attacker A2 (hollow; CI not drawn); a solid segment joins each defense's A1 and A2. The defense-study
split has 185 positive and 2,891 negative test users; there the undefended A1 is 0.718, against 0.707 on the E-5 split. Only the
exact per-user rate, with or without parity, brings A1 to chance (0.500, degenerate CI), at a skill of 0.23, and only because $T$ is even
(at $T{=}5$: 0.63 and 0.59). Itemized logs still accumulate under it: A2 is 0.53--0.54 at $T{=}20$ and rises to 0.60--0.62 at $T{=}50$
(Table~\ref{tab:defwild}). A post hoc odd-position attacker, not plotted here, reaches 0.730 under the exact rate and 0.667 with
cheap-classifier parity first at $T{=}20$ (Table~\ref{tab:oddattack}).
Status: the defense circled in red in (b), cheap-classifier parity plus a $\pm2$ band, is the pre-registered primary endpoint and failed all
four of its criteria; the undefended baseline and the other nine defenses are pre-registered secondary analyses; the exact-rate
variants ($^\ast$) and the skill-kept measure are post hoc. As for every medical result, the labels come from an LLM judge that failed
its pre-registered precision gate, so all results shown are exploratory (Section~\ref{sec:limits}). Data: E-8 \texttt{results.json} and \texttt{posthoc\_e8.json}.}
\label{fig:defense}
\end{figure*}

\subsection{Per-request gaps close cheaply; user-level inference persists}
Mixing is a poor trade. At $\eta=0.5$ it halves the per-request gap and halves the skill kept, while the user-level AUC falls only from
0.72 to 0.69. Per-category parity with perfect labels removes the per-request gap ($-0.001$; Proposition~\ref{prop:parity}
guarantees zero only against $O\cup U$ on global deciles) and keeps 0.99 of the router's skill. It lowers user-level inference to 0.60 but not to chance. Two explanations remain open: medical
users may be routed differently on their other requests too, or requests the judges missed may stay in the comparison set and keep being
routed to the weak model. With a cheap classifier, the candidate flag of Section~\ref{sec:labels}, parity removes only half of
the gap ($-0.151$), and even this is optimistic: the flag mixes many non-medical requests into the sensitive part, and it also chose
the conversations the judges labeled; elsewhere, only a 2\% sample of requests was labeled. The pre-registered primary
defense, parity with the cheap classifier followed by a $\pm2$ budget band, fails all four criteria: per-request gap, user-level
AUC, retained benefit (0.33 against 0.90), and budget.

\subsection{Budget bands leak through their boundaries}
\label{sec:band}
The two-sided band leaks as Remark~\ref{rem:band} predicts. With slack
$\pm1$, after 20 requests 60.5\% of frequent medical askers sit at the lower edge of the band, against 34.4\% of users with no medical
requests, and 12.4\% against 40.4\% sit at the upper edge (a post hoc count). The count alone then still identifies them (A1 0.67--0.71
across slacks). Only the exact rate, a post hoc variant, hides the strong count, and only at an even number of requests (A1 0.500 at
$T{=}10$, 20, and 50; 0.63 at $T{=}5$, where one undefended decision shows); it does so at a skill of 0.23. The
original position-agnostic itemized-log attacker A2 reaches 0.53 at $T{=}20$ and 0.62 at $T{=}50$. These are attack-specific results, not a ceiling:
the complete log preserves the odd-indexed input decisions (Remark~\ref{rem:exact}), which a position-aware observer can use.
Reusing a conversation's first
decision, an option in some routers, lowers user-level inference only from 0.72 to 0.69 and raises the budget by 2.8 points. Fixing one
model per user lowers it to 0.60, at the cost of discarding content-based routing for returning users.

\subsection{Odd positions retain user-level signal}
\label{sec:oddattack}
A post hoc follow-up tests the transcript identity directly: A2-odd fits on training users' odd positions and scores odd positions
within held-out defended prefixes (Section~\ref{sec:attackers}). At $T{=}20$, it uses ten routing decisions. Under RouteLLM's exact rate, AUC rises from original
A2's 0.532 [0.492, 0.573] to 0.730 [0.700, 0.763]; the paired difference is $+0.199$ [0.148, 0.252], on the same 185 positive
and 2,891 negative test users. With cheap-classifier parity before the exact rate, it reaches 0.667 [0.634, 0.700], a difference
of $+0.129$ [0.076, 0.182]. The corresponding notdiamond AUCs are 0.703 [0.663, 0.742] and 0.672 [0.634, 0.710], with differences
of $+0.146$ [0.095, 0.201] and $+0.096$ [0.040, 0.152]. Table~\ref{tab:oddattack} reports every router--defense--horizon cell,
including the two notdiamond differences at $T{=}50$ whose intervals include zero.

The original A2 AUCs and test counts reproduce in all 16 cells, as do its four $T{=}20$ intervals; the improvement comes from a
different observer, not a correction to the frozen results. The pairwise transcript identities hold for all 982,067 requests from
97,742 users in all four configurations. At even $T$, the strong count still has AUC 0.500. Hence hiding this aggregate does not
hide the ordered log. These are achievable attack performances, not upper bounds on inference. The follow-up is exploratory,
inherits the failed medical-label gate, and uses unadjusted intervals with fixed training fits; users eligible at different $T$
need not be the same. Its defense-study split also differs from the original E-5 split, so 0.730 and E-5's 0.707 are not a paired comparison.

\begin{table*}[!t]
\centering
\caption{Routing policies in real units on RouterBench (36,497 prompts with per-model correctness and dollar cost). Accuracy: mean score of the
selected model. vs.\ rand.: accuracy minus random routing at the same budget. Max leak: largest $|\Delta_{\mathrm{adj}}|$ over the four sensitive groups
of Fig.~\ref{fig:excess}.}
\label{tab:defrb}
\small
\setlength{\tabcolsep}{4.5pt}
\begin{tabular}{@{}lrrrrrrrr@{}}
\toprule
 & \multicolumn{4}{c}{RouteLLM (GPT-4 vs.\ Mixtral-8x7B)} & \multicolumn{4}{c}{notdiamond (GPT-4 vs.\ GPT-3.5)} \\
\cmidrule(lr){2-5}\cmidrule(l){6-9}
Policy & Acc.\ (\%) & vs.\ rand.\ (pp) & \$/1k & Max leak & Acc.\ (\%) & vs.\ rand.\ (pp) & \$/1k & Max leak \\
\midrule
Always weak & 54.7 & $+0.00$ & 0.13 & 0.000 & 61.9 & $+0.00$ & 0.24 & 0.000 \\
Always strong & 78.1 & $+0.00$ & 3.29 & 0.000 & 78.1 & $+0.00$ & 3.29 & 0.000 \\
Router (no defense) & 64.0 & $-1.30$ & 2.12 & 0.422 & 70.6 & $+1.09$ & 1.40 & 0.338 \\
Random mixing, $\eta=0.5$ & 65.0 & $-0.77$ & 1.89 & 0.223 & 70.2 & $+0.50$ & 1.57 & 0.171 \\
Pooled parity (pre-registered) & 63.8 & $-1.52$ & 2.04 & 0.137 & 70.6 & $+1.16$ & 1.36 & 0.166 \\
Per-category parity$^\ast$ & 63.8 & $-1.50$ & 2.04 & 0.064 & 70.6 & $+1.16$ & 1.36 & 0.021 \\
Oracle router$^\ast$ & 82.1 & $+16.79$ & 1.71 & 0.106 & 83.3 & $+13.82$ & 1.68 & 0.059 \\
Oracle + per-category parity$^\ast$ & 82.1 & $+16.78$ & 1.69 & 0.031 & 83.3 & $+13.81$ & 1.67 & 0.010 \\
\bottomrule
\end{tabular}

\smallskip
\footnotesize $^\ast$Post hoc. The oracle routes by realized accuracy gain (ties broken at random); it is an upper bound on accuracy at the
router's budget, not a deployable router.
\end{table*}

\subsection{Real accuracy and dollars}
\label{sec:realunits}
RouterBench tests the router's self-assessment. RouteLLM is 1.30 points \emph{worse} than random routing at the same budget, and its
scores are not ordered by the strong model's real advantage (Spearman $-0.30$ across score deciles). The notdiamond router is 1.09
points better than random. Skill kept in Fig.~\ref{fig:defense} is therefore the router's own belief, and for RouteLLM a defense forfeits little real
quality because there is little to forfeit. The informative comparison is the post hoc oracle. Against Mixtral-8x7B, per-category parity leaves
its accuracy unchanged (82.1\%) and lowers its cost by 0.9\%, while the largest remaining gap falls from 0.106 to 0.031 (against
GPT-3.5: 83.3\%, cost $-0.1\%$, gap 0.059 to 0.010). For the actual routers,
parity changes accuracy by at most 0.2 points and lowers cost by 2--4\%.

Pooled parity is not enough. The pre-registered variant treats the four sensitive subjects as one group. On RouterBench all four start
with negative gaps, but of different magnitudes (RouteLLM: $-0.24$ to $-0.42$). In this evaluation, pooled parity over-corrects
medical questions from $-0.24$ to $+0.14$ while psychology keeps $-0.14$ (Remark~\ref{rem:pooling}), so the leakage criterion fails
although the accuracy ($-0.22$ points) and cost ($-3.8\%$) criteria pass. On LMSYS, contrary to a pre-registered secondary
prediction, RouteLLM shows the same failure when one parity group is formed from any moderation flag, and there the directions differ as well.
Sexual requests, routed to the strong model, and harassment requests, routed to the weak one, offset each other in the pooled group,
so the harassment gap stays at $-0.09$ (from $-0.32$; $+0.01$ with parity for harassment alone) and self-harm barely moves, from
$-0.24$ to $-0.22$. For notdiamond, whose four LMSYS gaps share one sign, the same pooled parity leaves every gap within 0.041, as
predicted. Parity must be applied per category because categories differ in the size of their gaps (RouterBench) or in
their direction (LMSYS).

User-level costs stay self-assessed: WildChat's threshold (0.445) and 80\% of its requests fall below
RouterBench's first score decile (0.555), so the calibration does not transfer.

\section{Discussion}
\label{sec:discussion}

This section turns the measurements into recommendations for three audiences and states what the pre-registration record shows.

\subsection{Recommendations}
\emph{Router developers} can use length-matched parity for a single category or disjoint category partition, with validated labels and calibrated cutoffs. With accurate
labels it equalizes strong rates within every global length decile (Proposition~\ref{prop:parity}), and it cut the WildChat medical gap,
measured on the category's own deciles, to $-0.001$; in a post hoc RouterBench analysis it changed accuracy by at most 0.2 points,
lowered cost by 2--4\%, and left a largest gap of 0.064 (RouteLLM) and 0.021 (notdiamond).
This advice holds only with accurate labels: with a cheap candidate flag, parity moves the medical $\dadj$ only from $-0.331$ to
$-0.151$, and the pre-registered deployable defense built on that flag met none of its four criteria. %
Developers should test every category after any pooling, because categories shift by different amounts and, for RouteLLM on LMSYS,
in opposite directions (Remark~\ref{rem:pooling}; Section~\ref{sec:realunits}). They should also report the operating point, since on WildChat the gap is
$-0.35$ at its peak but $-0.14$ at a 20\% strong-call fraction.

\emph{Gateway operators} whose systems retain routing metadata should document that, when routing depends on content, disabling content
logging does not by itself hide topics, because the retained record may keep the model together with token counts and a caller identifier.
Neither per-conversation stickiness nor the two-sided per-user budget band we test is a privacy control: stickiness lowers user-level
AUC only from 0.72 to 0.69, and under a $\pm1$ band 60.5\% of frequent medical askers end at the lower edge against 34.4\% of users with
no medical requests (Section~\ref{sec:band}). We did not test the per-key spend limits that some gateways offer.
A finite default retention would bound what a later breach or subpoena can reach (Section~\ref{sec:threat}). %

\emph{Adopters} for whom user-level inference matters have no complete tested option. A post hoc exact per-user rate hides the
strong count over an even number of requests (A1 0.500) but not over an odd number (0.63 at $T{=}5$); we did not test the
total-cost bill under it. Its ordered log preserves odd-position decisions, and a post hoc attacker using them reaches AUC 0.730
at $T{=}20$ and 0.793 at $T{=}50$ on RouteLLM (exploratory; Table~\ref{tab:oddattack}). %
It also gives up most of the router's self-assessed value, which we could not express in accuracy or dollars: the router keeps
0.23 of its skill, and the 5\% of users at each end lose or gain at least 47 points of strong-model share. %
Fixing one model per user lowers A1 to 0.60 and A2 to 0.59 while keeping 0.60 of the skill, but it gives up content-based routing
for returning users. Without a defense, hiding itemized logs changes little (A1 0.718 against A2 0.739). %

\subsection{What the pre-registration record shows}
Table~\ref{tab:prereg} reports every pre-registered primary and co-primary endpoint of E-3 to E-9, including failures; secondary
predictions and the decision rules of the two pilots (E-1, E-2) are in the study records. The user-level
primary endpoint was missed (0.707 against 0.75), an earlier marginal pass proved to be the upper end of noise, and the co-primary endpoint
with labels from later requests was missed in both E-4 and E-5; the E-6a sexual-content endpoint missed its threshold, with the sign
opposite to the prediction; both judge-label gates failed, so medical results stay exploratory; and no pre-registered defense endpoint
was met: parity with a budget band failed all four criteria, and pooled parity failed its leakage criterion on RouterBench. Among the
failed secondary predictions, the held-out harassment gain over a 20-bin length model was $+0.014$ against at least $+0.10$ predicted,
and parity over any moderation flag left RouteLLM's LMSYS harassment and self-harm gaps at $-0.094$ and $-0.218$ rather than
within 0.05.
Pre-registration also did not guarantee an independent replication: both of RouteLLM's held-out endpoints were met, but 61\% of the
held-out harassment rows repeated an exploration prompt. %
On never-seen prompts (post hoc), its gaps fall to $-0.193$ and $-0.186$ and still pass; we treat these as the robust estimates. %

\section{Ethics and Responsible Disclosure}
\label{sec:ethics}
This section states how we obtained and handled the data, including four departures from our own handling rules, and how we disclose the findings.

\emph{Data.} We use public research datasets. WildChat-1M~\cite{zhao2024wildchat} is now released under
ODC-BY~\cite{wildchatcard}; its public version excludes conversations flagged as toxic by the OpenAI
Moderations API or Detoxify and conversations that Mireshghallah et al. flagged
as containing personal or sensitive information, and its text was de-identified
with Microsoft Presidio and hand-written rules~\cite{wildchatcard}.
We accepted the LMSYS-Chat-1M Dataset License Agreement~\cite{lmsyscard}, which permits research and commercial use but forbids
transfer to third parties, attempts to identify individuals, and inferring any sensitive personal data encompassed in the dataset.
It does not say whether classifying a message's topic is such inference; we do not claim that our use is covered and asked the
licensor to clarify on October~6, 2026. LMSYS has no user identifier; we make no user-level inference on it, use its moderation flags
and our judges' medical labels only to group requests, and report only aggregate per-category statistics of routing decisions.
RouterBench~\cite{hu2024routerbench} is publicly available, but we have not confirmed its license, so we redistribute none of its
content; its prompt text did not leave the host.
We release no conversation text beyond a few short WildChat fragments quoted in the study records, and no per-request tables,
because token counts and router scores can link a row back to its source conversation (Artifact Statement).
The study involved no contact with the people whose conversations the datasets contain and no attempt to identify them. The user-level
analysis infers a topic attribute of pseudonymous WildChat users and is reported only as aggregate AUCs. The per-request tables kept
from E-4 and E-5 replace hashed IP addresses with random user numbers, which does not prevent that linkage. %
This study was not reviewed by an institutional review board. It is a secondary analysis of previously released public datasets. Each
protocol fixed its handling rules before its data were downloaded; the decision not to release the de-identified per-request tables
came later, after a code audit showed that their rows can be linked back (the E-4 protocol had listed a reproducibility package among
the uses of its table). %

\emph{Handling.} Except in one timing benchmark and the label checks described below, every model that read conversation text
(routers, classifiers, LLM judges, and the machine translation used as a reading aid in E-6a) ran with open weights on a GPU virtual
machine, the host, and no external API scored or translated any conversation. The protocols allowed only aggregate results, text-free
score tables, and gold-set labels to be copied off the host, and assigned the label spot checks and gold sets to the author. The working
directories that held conversation text in E-5, E-6a, and E-7-lite were deleted when each study ended.
The earlier studies (E-2 to E-5) departed from these rules four times.
First, to estimate E-2's run time on the author's computer, its two WildChat files, which hold message text and hashed IP addresses,
were downloaded there; the router and the topic classifier scored 640 English first turns, and the script printed only timing and
throughput summaries. The directory was deleted about nine minutes after the download began and about six minutes after its
completion was observed; the study ran on the host, and its deviation log (v1.2) recorded the timing but not the download.
Second, the label spot-check snippets of E-2 (40), E-3 (20), and E-4 (600), at most 200 characters each, and the 400 E-5 gold-set
lines were printed in the terminal of the AI coding assistant used in this study (Claude, by Anthropic), so their text was sent to the
assistant's provider, against our rule that text stays on the host. Each E-5 line held the item number, the sampling stratum, and at
most 120 characters of the previous message and 300 of the message, capped at 330 characters (about 170--180 of the message after a
long previous one; the judges read up to 1,000 and 300). With no input from the author, the assistant judged the spot checks and
assigned the 400 E-5 reference labels (Section~\ref{sec:labels}), not blind: the lines showed classifier scores and sampling pools
(E-2, E-4) or the stratum, which encodes the judge's label (E-5), the E-3 sample held only classifier positives, and the
assistant knew the hypotheses, so agreement may be overestimated. The E-4 and E-5 records called this the author's work; corrections appended to the E-2
to E-5 deviation logs on October~6, 2026 record who read and labeled each set.
Third, when the E-5 analysis was rerun after a merge-key fix, score tables containing WildChat's hashed IP addresses and conversation
hashes were copied to a temporary directory on the author's local computer and deleted after the analysis, within about an hour; the
judge outputs copied with them had their raw-text column removed on the host first.
Fourth, a file of gold-set keys that included conversation hashes was committed to the study's repository. We rewrote the history to
purge it from every commit; commits before the E-5 results commit, including the earlier protocol freezes, kept their identifiers
(Artifact Statement).
Hashed IP addresses and conversation hashes are fields of the public WildChat release, but our protocols kept them on the host. After a
code audit found that the E-4 and E-5 snippets had passed through the assistant, we moved all reading of conversation text off it. The
only such reading afterwards was of the 150-item E-6a gold set, which the author read and labeled in the author's own browser, on a
page served only on the host's loopback interface: the snippets and their machine translations traveled through an encrypted SSH
tunnel to the author's computer (a logged deviation from the protocol's SSH terminal), so the text, not only identifiers and labels,
left the host, but no AI assistant or external API received it. E-7-lite involved no human reading.

\emph{Sensitive content.} No one read any text in the sexual, self-harm, harassment, or violence groups. These groups are defined only
by the moderation flags shipped with LMSYS-Chat-1M; their messages were not sent to the judges, and none entered a gold set.
Conversations flagged for sexual content involving minors were excluded from every analysis and table, and only their number was kept.
The pending probe of OpenRouter's Auto Router~\cite{openrouterauto} will send only public MMLU questions~\cite{hendrycks2021mmlu}
and author-written template questions that differ only in topic: no real conversation, no harassment, self-harm, or suicide content,
and no sexual content beyond MMLU's human-sexuality questions. It attempts no prompt injection or jailbreak and stores no response text. %

\emph{Disclosure.} We notified the maintainers of the three systems we measured, RouteLLM, Not Diamond, and vLLM Semantic Router,
through private channels on October~2, 2026, six days before this preprint was submitted. The notice summarized the results and the
mitigations of Section~\ref{sec:defense} and offered a private draft on request. By October~7, 2026, Not Diamond had replied, with no factual corrections; the other two had not.
We did not measure the routers of OpenRouter and LiteLLM and describe them only from their public documentation; we will send them the
paper when it is posted and incorporate their corrections in a revision.
We do not treat the finding as a software vulnerability: there is no defect to patch in the usual sense, because the leak follows from
content-dependent routing combined with standard metadata logging. The usual vulnerability embargo therefore does not apply; the notice
gave maintainers the chance to correct facts before release. Gateways and cloud platforms that only record the served model are not
recipients, because faithful logging is not a defect.

\section{Limitations}
\label{sec:limits}
This section states what our evidence does not establish.

(1)~\emph{Labels.} The medical judges failed both pre-registered label gates (Section~\ref{sec:labels}), so every medical result is
exploratory. Only the LMSYS gate (precision 0.75, weighted recall 0.69) tested them against human labels, from one annotator, the
author, so we cannot report agreement between humans. The WildChat gate (0.70) compared the judge with reference labels that the AI
coding assistant assigned, not blind, from a shorter window (at most 300 against 1,000 characters; Section~\ref{sec:ethics}), that is,
one model with another; the assistant's labels agree with the label judge and a second LLM judge about as well as the two judges agree
with each other (Cohen's $\kappa = 0.57$ and $0.65$, against $0.64$), and this agreement and the 0.70 may be overestimated.
Recall on WildChat cannot be estimated for the population, and the WildChat comparison set contains medical requests that no judge
saw: on a 2\% sample of non-candidate messages, the judge labeled 1.77\% as medical.
In three simulated misclassification scenarios, RouteLLM's true LMSYS gap keeps its sign but ranges from $-0.10$ to $-0.23$, against
$-0.162$ observed, so its size is uncertain. %
The moderation flags that define the confirmatory categories are themselves classifier output: the per-message OpenAI moderation
outputs shipped with LMSYS-Chat-1M, collected in 2023~\cite{zheng2024lmsys,lmsyscard}. The paper and dataset card do not state which
moderation model/version generated the released per-message outputs; consequently, those labels cannot be reproduced exactly from the
release metadata alone. The current API documentation~\cite{openai_moderation} describes only its current model. We did not validate them, because we chose not to have anyone read
flagged text (Section~\ref{sec:ethics}). %

(2)~\emph{Duplicates.} Public chat logs repeat prompts, so the held-out replication is less independent than its design assumed
(Section~\ref{sec:discussion}). On never-seen prompts, notdiamond's self-harm gap, already short of the pre-registered criterion on all
held-out rows, becomes indistinguishable from zero (Section~\ref{sec:perreq}). %
The fingerprint (both rounded router scores and the token count) is not a text-identity test. Distinct texts can share it, including
texts that differ only beyond both routers' truncation windows and have equal token counts; near-duplicates with different fingerprints
can remain. On LMSYS, the pre-registered intervals resample requests and ignore this dependence; the fingerprint-cluster intervals
of Table~\ref{tab:perreq} account for repeated fingerprints but hold the
comparison set fixed.

(3)~\emph{Routers.} We study two open cost/quality routers, both BERT-class classifiers, and one domain router. Routers with other
architectures may leak differently. For the domain router, we treat the \emph{health} label as the decision and did not map domains to
models under its default configuration. We evaluated the open notdiamond-0001 model, not Not Diamond's hosted service. Not Diamond informed us that notdiamond-0001
has not been actively developed for several years; our results describe that open model, not the company's current routing service.
We have not measured the leakage of a
hosted commercial router (Wang et al.~\cite{wang2026misguided} measure how six evaluated commercial routers, OpenRouter's Auto Router among them,
route by source benchmark, without a privacy analysis); the pre-registered probe of OpenRouter's Auto Router (Section~\ref{sec:ethics}) is released with the artifact,
and its results will be reported in a revision.

(4)~\emph{Users.} User-level evidence covers one category and one dataset: medical requests with exploratory labels, on WildChat, the
only public source we use that has a user identifier. Its pre-registered primary endpoint was not met (AUC 0.707 against 0.75 at $T{=}20$).
User-level inference for harassment, self-harm, and sexual requests is unmeasured. Users are hashed IP addresses, which merge people who
share an address and split people who use several, and the public release excludes conversations flagged as toxic by the OpenAI Moderations API or Detoxify and those that
Mireshghallah et al. flagged as containing personal or sensitive information~\cite{wildchatcard}. Both effects likely understate user-level inference. User labels come from the same
content-based judge labels as the per-request analysis, not from attributes that users reported, and because the domain router's labels
helped select the messages the judges read, its 0.92 is an upper estimate. The attacker is also given labeled
training users, which favors the attacker. The odd-position follow-up is post hoc, assumes a complete ordered prefix with known
starting phase, and inherits the failed medical-label gate. Its intervals hold training fits fixed and are not adjusted for multiple
comparisons; differing eligible cohorts across horizons prevent a within-cohort accumulation claim.

(5)~\emph{Real units.} Accuracy and dollar costs are measured on RouterBench. Its sensitive groups are academic MMLU subjects, not users'
own questions, and its prices are the 2023--2024 prices recorded in the benchmark. %
The oracle routes by realized per-prompt accuracy gain, so it is not deployable, and it is only one quality-optimal router
(Section~\ref{sec:realdef}). The routers' gaps (13--42 points) are pre-registered; the oracle's (1--11), the comparison, and the
real-unit cost of per-category parity are post hoc.
The calibration does not transfer to real chats: 80\% of WildChat requests fall in RouterBench's lowest score decile (below 0.555).
User-level defense costs therefore remain in the router's own units, which for RouteLLM do not track real accuracy (Spearman $-0.30$
across score deciles). %

(6)~\emph{Parity resolution.} Parity is enforced within global length deciles, and Proposition~\ref{prop:parity} equalizes rates only
on these bins. Measured on each category's own deciles, a residual remains, up to 0.064 on RouterBench (RouteLLM, law questions), above
the 0.02 leakage criterion of the pre-registered RouterBench study. With perfect labels on WildChat, parity leaves no gain over a 20-bin
length model ($\Delta\mathrm{AUC} = +0.000$), but we tested it only against the original observers of Section~\ref{sec:attackers},
not the odd-position attacker, which we evaluate only for the two exact-rate variants.
An observer that uses exact token counts or other record fields may recover residual signal, as length-matched controls fail to close
selection channels in agent workflows~\cite{xu2026sicc}.

(7)~\emph{Operating point.} Leakage depends on the threshold. We calibrate it to a strong-call fraction, the procedure RouteLLM
documents~\cite{routellmrepo}. At its package-default threshold with an uncalibrated model pair, RouteLLM's similarity-weighted ranking
(\texttt{sw\_ranking}) router sent every task to the weak model~\cite{kumar2026routers}; such a router leaks little, but it also gives up content-based routing. We therefore report the whole
operating-point range. On WildChat, the medical gap peaks at $-0.35$ when 55--60\% of comparison requests reach the strong model and
shrinks to $-0.14$ at 20\% and $-0.13$ at 95\%. %
On LMSYS, RouteLLM's harassment and self-harm gaps keep their sign at the 20\% and 80\% points; notdiamond's reverse at 80\%
($+0.116$ and $+0.043$). %

(8)~\emph{Scope.} Both datasets are restricted to English conversations. LMSYS contributes only first turns, because it has no user
identifier and the first turn is the setting RouteLLM was trained for, and it was collected in 2023, so usage may have shifted since.
We measure an association at matched length, not its cause. Sensitive requests may also differ in format or phrasing, for example
exam-style versus conversational, which we did not control; register is known to shift complexity-based routing, though mainly
through input length~\cite{koul2026register}, for which our gaps adjust. The observer's inference does not depend on the cause, but a mitigation
that targets topic might.

\section{Conclusion}
\label{sec:conclusion}
This paper treated the routing decision that many gateways and some cloud platforms can record as a privacy channel and measured it
on real traffic. At matched length, the direction of the shift depends on the category and the router. For RouteLLM at the 50\% operating
point, harassment and self-harm requests reach the strong model 19 points less often than comparable requests on prompts never
seen during exploration (post hoc; 24--32 points on all held-out rows, as pre-registered), sexual requests 10 points more often
(a secondary endpoint predicted to stay below 10), and medical requests, with exploratory LLM-judge labels, 31 points less often on
distinct WildChat prompts (post hoc; 33 on all rows). The notdiamond router sends all four less often, but its never-seen self-harm
gap is indistinguishable from zero.
An observer who sees only how many of a user's first 20 requests RouteLLM sent to the strong model separates frequent medical askers
from users with no medical requests at an AUC of 0.71, exploratory and below the pre-registered primary endpoint's 0.75 (0.92, an
upper estimate, for a domain router's health label). %
On RouterBench, the routers' gaps on sensitive subjects (13--42 points, pre-registered) exceed those of an oracle that routes by
realized accuracy gain at the same budget (1--11 points, post hoc), and per-category parity leaves a largest gap of at most 0.064 for
at most 0.2 accuracy points (post hoc). %
Pooled parity fails because categories shift by different amounts and, for RouteLLM on LMSYS, in opposite directions; per-conversation
stickiness barely helps, and budget bands leak at their edges. Only a post hoc exact per-user rate hides the strong count, and only
at even request counts; its ordered log preserves odd-position decisions exactly. A post hoc attack using them reaches AUC 0.73
after 20 RouteLLM requests (exploratory), even as the defense gives up most of the router's self-assessed value, a cost we could state
only in the router's own units. %
Practitioners should therefore treat the model field as content-derived data: consider per-category parity with validated labels and
calibrated cutoffs, and stop counting per-conversation stickiness and budget bands as privacy controls.

The limitations of Section~\ref{sec:limits} set the future work. Medical magnitudes need judge labels validated against human labels
on both datasets; ours failed the LMSYS gate, and the WildChat gate used the AI coding assistant's non-blind labels. User-level measurement of
harassment and self-harm needs user-linked data, such as the full WildChat release, for which access is under review. Hosted commercial
routers need a leakage measurement, starting with the pre-registered probe of OpenRouter's Auto Router. A controlled rewrite of the same requests
would show whether part of each gap reflects format or wording rather than topic. The per-user rate distribution that minimizes
user-level inference at a given budget, and the residual leakage of parity to adaptive observers that use exact token counts, are
untested. %

\appendix[Supporting Tables]
This appendix gives the full numbers behind Figs.~\mbox{\ref{fig:excess}--\ref{fig:defense}}: leakage on RouterBench
(Table~\ref{tab:rbleak}), user-level inference (Table~\ref{tab:userlevel}), and the defenses on WildChat (Table~\ref{tab:defwild}).
Table~\ref{tab:oddattack} reports all 16 configurations of the post hoc odd-position follow-up, including paired differences and uncertainty.

\begin{table*}[!t]
\centering
\caption{Leakage on RouterBench's MMLU subjects (comparison set: all other RouterBench prompts). Routers' $\Delta_{\mathrm{adj}}$:
pre-registered (E-8b measurement M2). Oracle (post hoc, as is any comparison with it): a router with the same budget that sends the prompts with
the largest realized accuracy gain to the strong model, with ties broken at random; it is one router that maximizes accuracy at that budget.
Gain gap (exploratory): the strong model's mean accuracy gain on the group minus its gain on the comparison set.}
\label{tab:rbleak}
\small
\begin{tabular}{@{}lrlrrlr@{}}
\toprule
 & & \multicolumn{3}{c}{RouteLLM (GPT-4 vs.\ Mixtral-8x7B)} & \multicolumn{2}{c}{notdiamond (GPT-4 vs.\ GPT-3.5)} \\
\cmidrule(lr){3-5}\cmidrule(l){6-7}
Group & $n$ & $\Delta_{\mathrm{adj}}$ [95\% CI] & Oracle & Gain gap & $\Delta_{\mathrm{adj}}$ [95\% CI] & Oracle \\
\midrule
Medical (7 subjects) & 1,417 & $-0.241$ [$-0.266$, $-0.217$] & $-0.079$ & $-0.11$ & $-0.294$ [$-0.316$, $-0.272$] & $-0.011$ \\
Psychology (2) & 1,157 & $-0.422$ [$-0.443$, $-0.401$] & $-0.085$ & $-0.12$ & $-0.128$ [$-0.157$, $-0.098$] & $-0.023$ \\
Human sexuality & 131 & $-0.357$ [$-0.419$, $-0.289$] & $-0.043$ & $-0.05$ & $-0.338$ [$-0.408$, $-0.267$] & $-0.059$ \\
Law (3) & 1,763 & $-0.407$ [$-0.422$, $-0.391$] & $-0.106$ & $-0.08$ & $-0.328$ [$-0.347$, $-0.309$] & $-0.012$ \\
\bottomrule
\end{tabular}
\end{table*}

\begin{table*}[!t]
\centering
\caption{User-level inference on WildChat (medical). A test user is positive if $\ge$20\% of their requests are medical and negative if
none are; the attacker sees their first $T$ requests. A1: count of strong-routed requests (an aggregated bill); A1+L: plus mean length;
A2: itemized log, per-request likelihood ratio of the routing bit given the length decile. Topic-only null: the AUC if users differed only
in topic mix. Domain router: an upper estimate, as its labels helped select what the judges read. Pos.: positive test users.}
\label{tab:userlevel}
\small
\setlength{\tabcolsep}{4pt}
\begin{tabular}{@{}rrrrrrrrrrrrrr@{}}
\toprule
 & \multicolumn{6}{c}{Pre-registered E-5 split (Fig.~\ref{fig:userlevel})} & \multicolumn{7}{c}{Defense-study split (Table~\ref{tab:defwild})} \\
\cmidrule(lr){2-7}\cmidrule(l){8-14}
 & & Length & RouteLLM & notdiamond & Domain & Topic-only & & Length & \multicolumn{3}{c}{RouteLLM} & \multicolumn{2}{c}{notdiamond} \\
\cmidrule(lr){10-12}\cmidrule(l){13-14}
$T$ & Pos. & only & A1 & A1 & router & null & Pos. & only & A1 & A1+L & A2 & A1 & A2 \\
\midrule
5 & 409 & 0.555 & 0.684 & 0.620 & 0.795 & 0.705 & 409 & 0.566 & 0.675 & 0.679 & 0.696 & 0.632 & 0.691 \\
10 & 409 & 0.558 & 0.695 & 0.642 & 0.851 & 0.772 & 409 & 0.566 & 0.690 & 0.695 & 0.713 & 0.642 & 0.708 \\
20 & 179 & 0.537 & 0.707 & 0.661 & 0.916 & 0.832 & 185 & 0.560 & 0.718 & 0.731 & 0.739 & 0.680 & 0.736 \\
50 & 63 & 0.473 & 0.736 & 0.656 & -- & 0.904 & 62 & 0.521 & 0.769 & 0.803 & 0.800 & 0.674 & 0.712 \\
\bottomrule
\end{tabular}

\smallskip
\footnotesize Pre-registered primary endpoint (P2): RouteLLM's A1 at $T{=}20$ on the E-5 split, 0.707 [0.671, 0.740], below 0.75; exploratory (medical label gate failed).
\end{table*}

\begin{table*}[!t]
\centering
\caption{Defenses as post-processing of RouteLLM on WildChat (medical). Skill kept (post hoc): the router's own expected benefit relative to random
routing at the same budget (0 = random, 1 = undefended); it is the router's self-assessment, not measured quality (Section~\ref{sec:realunits}).
It replaces the pre-registered retained benefit, the defended router's expected benefit as a fraction of the undefended one's, on which the
primary defense scores 0.33. User-level columns are AUCs of the attackers of Table~\ref{tab:userlevel}.}
\label{tab:defwild}
\small
\begin{tabular}{@{}llrrrrrr@{}}
\toprule
 & & $\Delta$Budget & Skill & Per-request & \multicolumn{3}{c}{User-level AUC} \\
\cmidrule(l){6-8}
Defense & Needs & (pp) & kept & $\Delta_{\mathrm{adj}}$ & A1, $T{=}20$ & A2, $T{=}20$ & A2, $T{=}50$ \\
\midrule
No defense & -- & $+0.0$ & 1.00 & $-0.331$ & 0.718 & 0.739 & 0.800 \\
Random mixing, $\eta=0.25$ & -- & $+0.4$ & 0.75 & $-0.250$ & 0.713 & 0.729 & 0.781 \\
Random mixing, $\eta=0.5$ & -- & $+0.8$ & 0.50 & $-0.164$ & 0.688 & 0.711 & 0.757 \\
Per-category parity, cheap classifier & classifier & $+0.0$ & 0.99 & $-0.151$ & 0.648 & 0.664 & 0.719 \\
Per-category parity, perfect labels & (not deployable) & $+0.0$ & 0.99 & $-0.001$ & 0.597 & 0.610 & 0.642 \\
Per-user budget band $\pm1$ & per-user counters & $+1.8$ & 0.38 & $-0.089$ & 0.670 & 0.611 & 0.672 \\
Per-user budget band $\pm2$ & per-user counters & $+1.5$ & 0.52 & $-0.129$ & 0.691 & 0.669 & 0.698 \\
Per-user budget band $\pm4$ & per-user counters & $+1.2$ & 0.64 & $-0.168$ & 0.709 & 0.733 & 0.751 \\
Parity + band $\pm2$ (primary) & both & $+1.4$ & 0.52 & $-0.067$ & 0.626 & 0.582 & 0.666 \\
Exact per-user rate$^\ast$ & per-user counters & $+1.9$ & 0.23 & $-0.047$ & 0.500 & 0.532 & 0.618 \\
Parity + exact rate$^\ast$ & both & $+1.9$ & 0.23 & $-0.023$ & 0.500 & 0.538 & 0.600 \\
Sticky model per conversation & conversation id & $+2.8$ & 0.81 & $-0.321$ & 0.687 & 0.697 & 0.773 \\
Fixed model per user & user id & $-1.9$ & 0.60 & $-0.200$ & 0.600 & 0.594 & 0.572 \\
\bottomrule
\end{tabular}

\smallskip
\footnotesize $^\ast$Post hoc, added after the pre-registered band variants revealed the boundary leak (Section~\ref{sec:band}).
A1 is 0.500 only at even $T$; at $T{=}5$ it is 0.63 and 0.59.
\end{table*}

\begin{table*}[!t]
\centering
\caption{Odd-position attacks on exact-rate defenses on WildChat medical labels: all configured routers, defenses, and horizons.
This analysis is post hoc and exploratory; the medical-label validity gate failed.
$T$ counts total original requests, and $m=\lceil T/2\rceil$ counts the observed odd-position decisions.
Original A2 fits and scores all positions; A2-odd fits only training users' odd positions and scores those within the same test prefix.
Intervals are unadjusted 95\% percentile CIs from 1,000 label-stratified test-user bootstrap resamples (seed 20260928), with training fits fixed.
The difference is A2-odd minus A2, using paired resamples.}
\label{tab:oddattack}
\small
\setlength{\tabcolsep}{5pt}
\begin{tabular}{@{}rrrccc@{}}
\toprule
$T$ & $m$ & \shortstack{Test users\\positive/negative} & Original A2 [95\% CI] & A2-odd [95\% CI] & Paired difference [95\% CI] \\
\midrule
\multicolumn{6}{@{}l}{\emph{RouteLLM: exact rate}} \\
5  & 3  & 409/6502 & $0.514\ [0.486,0.542]$ & $0.684\ [0.661,0.707]$ & $+0.170\ [0.139,0.202]$ \\
10 & 5  & 409/6502 & $0.502\ [0.471,0.532]$ & $0.702\ [0.678,0.725]$ & $+0.201\ [0.165,0.235]$ \\
20 & 10 & 185/2891 & $0.532\ [0.492,0.573]$ & $0.730\ [0.700,0.763]$ & $+0.199\ [0.148,0.252]$ \\
50 & 25 & 62/831   & $0.618\ [0.547,0.683]$ & $0.793\ [0.743,0.836]$ & $+0.175\ [0.097,0.259]$ \\
\addlinespace
\multicolumn{6}{@{}l}{\emph{RouteLLM: D2-cand + exact rate}} \\
5  & 3  & 409/6502 & $0.497\ [0.470,0.524]$ & $0.617\ [0.590,0.641]$ & $+0.120\ [0.085,0.153]$ \\
10 & 5  & 409/6502 & $0.506\ [0.476,0.537]$ & $0.639\ [0.614,0.663]$ & $+0.134\ [0.096,0.170]$ \\
20 & 10 & 185/2891 & $0.538\ [0.495,0.579]$ & $0.667\ [0.634,0.700]$ & $+0.129\ [0.076,0.182]$ \\
50 & 25 & 62/831   & $0.600\ [0.521,0.668]$ & $0.715\ [0.658,0.770]$ & $+0.115\ [0.032,0.205]$ \\
\addlinespace
\multicolumn{6}{@{}l}{\emph{notdiamond: exact rate}} \\
5  & 3  & 409/6502 & $0.487\ [0.457,0.517]$ & $0.672\ [0.646,0.701]$ & $+0.185\ [0.143,0.229]$ \\
10 & 5  & 409/6502 & $0.535\ [0.506,0.563]$ & $0.690\ [0.666,0.718]$ & $+0.155\ [0.118,0.193]$ \\
20 & 10 & 185/2891 & $0.557\ [0.514,0.595]$ & $0.703\ [0.663,0.742]$ & $+0.146\ [0.095,0.201]$ \\
50 & 25 & 62/831   & $0.622\ [0.552,0.689]$ & $0.686\ [0.620,0.751]$ & $+0.063\ [-0.041,0.166]$ \\
\addlinespace
\multicolumn{6}{@{}l}{\emph{notdiamond: D2-cand + exact rate}} \\
5  & 3  & 409/6502 & $0.488\ [0.458,0.516]$ & $0.636\ [0.607,0.665]$ & $+0.148\ [0.106,0.190]$ \\
10 & 5  & 409/6502 & $0.559\ [0.534,0.585]$ & $0.647\ [0.622,0.675]$ & $+0.088\ [0.051,0.126]$ \\
20 & 10 & 185/2891 & $0.576\ [0.534,0.614]$ & $0.672\ [0.634,0.710]$ & $+0.096\ [0.040,0.152]$ \\
50 & 25 & 62/831   & $0.630\ [0.566,0.704]$ & $0.674\ [0.614,0.736]$ & $+0.043\ [-0.064,0.144]$ \\
\bottomrule
\end{tabular}

\smallskip
\footnotesize The two attacks use identical test users at each $T$; eligibility changes across $T$.
The complete ordered prefix and its starting phase are assumed observable. At even $T$, the strong count remains fixed and its AUC is 0.500.
These attack AUCs are achievable performance, not upper bounds on inference. Both notdiamond difference intervals at $T=50$ include zero.
\end{table*}

\section*{Artifact Statement}
This statement lists what we release and what we withhold.
We release, for every pre-registered study, the protocol, the deviation log where deviations occurred, and the other study records (in
Traditional Chinese); the analysis code, including the script that generates Tables~\ref{tab:prereg}--\ref{tab:defwild} from the
result files; the aggregate results and run logs; the router revisions and weight hashes; the gold-set labels by item number (E-6a:
the author's; E-5: the AI coding assistant's); the aggregate label-error simulation of Limitations~(1), from E-7, whose draft protocol
was never frozen and is not released; the per-question router scores of the MMLU pilot; the E-9 probe, which rebuilds every prompt from a fixed seed, and its analysis
script; and the threat-model evidence table (in Traditional Chinese), with the URL and access date of each of its 59 sources and, for
55, the most recent snapshot that the Wayback availability API returned (for two, of the main-branch or pre-redirect URL).
Commit hashes in the study records refer to our private research repository, which is not public because its history contains
per-request tables; Table~\ref{tab:prereg} gives the freeze dates of E-3 to E-9. A rewrite of that history, recorded in the E-5
deviation log, removed a file with conversation hashes; it changed the identifiers of the E-5 results commit and later commits, but not
their dates. These dates are our own record of when each protocol was committed (E-5: before any judge label existed; E-8: before any
parity or budget defense was applied, after E-5b had measured its two stickiness defenses post hoc; E-9: before any request was
sent), not an external timestamp.
The artifact excludes the text-free per-request tables of the chat datasets and RouterBench, because router scores with token counts
can link a row back to its source and the licenses restrict redistribution (LMSYS-Chat-1M) or are unverified (RouterBench); it also
excludes our internal notes and disclosure correspondence.
The public datasets are on Hugging Face (\code{allenai/WildChat-1M}, \code{lmsys/lmsys-chat-1m} after accepting its license
agreement, \code{withmartian/routerbench}, and \code{cais/mmlu} for the pending probe); the study records give the revisions we used.
The artifact is available at \url{https://github.com/luka-krixvon/routing-metadata-leakage}.

\clearpage
\twocolumn
\bibliographystyle{IEEEtran}
\bibliography{refs}

\end{document}